\newcommand{\iid}{\stackrel {{\rm iid.}}{\sim}}
\newcommand{\ind}{\stackrel {{\rm ind.}}{\sim}}
\newcommand{\sumas}{\sum^n_{i=1}}
\newcommand{\ii}{i=1,\ldots,n}
\newtheorem{theorem}{Theorem}
\newcommand{\balpha}{\mbox{\boldmath $\alpha$}}
\newcommand{\bmu}{\mbox{\boldmath $\mu$}}
\newcommand{\bphi}{\mbox{\boldmath $\phi$}}
\newcommand{\bnu}{\mbox{\boldmath $\nu$}}
\newcommand{\bSigma}{\mbox{\boldmath $\Sigma$}}
\newcommand{\bepsilon}{\mbox{\boldmath $\epsilon$}}
\newcommand{\bLambda}{\mbox{\boldmath $\Lambda$}}
\newcommand{\bbeta}{\mbox{\boldmath $\beta$}}
\newcommand{\btheta}{\mbox{\boldmath $\theta$}}
\newcommand{\bDelta}{\mbox{\boldmath $\Delta$}}
\newcommand{\bdelta}{\mbox{\boldmath $\delta$}}
\newcommand{\bzeta}{\mbox{\boldmath $\zeta$}}
\newcommand{\NI}{\textrm{SMN}}
\newcommand{\N}{\textrm{N}}
\newcommand{\SNI}{\textrm{SMSN}}
\newcommand{\SN}{\textrm{SN}}
\newcommand{\ST}{\textrm{ST}}
\newcommand{\SSL}{\textrm{SSL}}
\newcommand{\A}{\mathbf{A}}
\newcommand{\bPsi}{\mbox{\boldmath $\Psi$}}
\newcommand{\bOmega}{\mbox{\boldmath $\Omega$}}
\newcommand{\bupsilon}{\mbox{\boldmath $\upsilon$}}
\newcommand{\bGamma}{\mbox{\boldmath $\Gamma$}}
\newcommand{\be}{\mathbf{b}}
\newcommand{\yp}{\mathbf{y}}
\newcommand{\y}{\mathbf{y}}
\newcommand{\Y}{\mathbf{Y}}
\newcommand{\bD}{\mathbf{D}}
\newcommand{\blambda}{\mbox{\boldmath $\lambda$}}
\newcommand{\Z}{\mathbf{Z}}
\newcommand{\X}{\mathbf{X}}
\newcommand{\bW}{\mathbf{W}}
\newcommand{\bH}{\mathbf{H}}
\newcommand{\E}{\textrm{E}}
\newcommand{\ba}{\mathbf{b}}
\newcommand{\app}{\stackrel {{\rm .}}{\sim}}
\title{Approximate inferences for nonlinear
mixed effects models with scale mixtures of skew--normal distributions}
\author{
    Fernanda L. Schumacher \\
    Department of Statistics \\
    Campinas State University\\
    Campinas, SP - Brazil\\
    \texttt{fernandalschumacher@gmail.com}\\
    \And 
    Dipak K. Dey\\
    Department of Statistics\\ University of Connecticut\\
    Storrs, CT - U.S.A.\\
    \texttt{dipak.dey@uconn.edu}\\
    \And 
    Victor H. Lachos\\
    Department of Statistics\\ University of Connecticut\\
    Storrs, CT - U.S.A.\\
    \texttt{hlachos@uconn.edu}\\
}
\begin{document}
\maketitle

\begin{abstract}
Nonlinear mixed effects models have
received a great deal of attention in the statistical literature in
recent years because of their flexibility in handling longitudinal
studies, including human immunodeficiency virus viral dynamics,
pharmacokinetic analyses, and studies of growth and decay. A
standard assumption in nonlinear mixed effects models for continuous
responses is that the random effects and the within-subject errors are normally distributed, making the model sensitive to outliers. We present a novel class of asymmetric nonlinear mixed effects models that provides efficient parameters estimation in the
analysis of longitudinal data. We assume that, marginally, the
random effects follow a multivariate scale mixtures of skew--normal
distribution and that the random errors follow a symmetric
scale mixtures of normal distribution, providing an appealing robust
alternative to the usual normal distribution. We propose an approximate method for maximum likelihood estimation based on an EM-type algorithm that produces approximate maximum likelihood estimates and significantly reduces the numerical difficulties associated with the exact maximum likelihood estimation. Techniques for prediction of
future responses under this class of distributions are also briefly
discussed. The methodology is illustrated through an application
to Theophylline kinetics data and through some simulating studies.
\end{abstract}

\keywords{Approximate likelihood \and EM--algorithm \and Nonlinear mixed effects models \and
Linearization \and Scale mixtures of skew--normal distributions}

\section{Introduction}

This is the birth centenary year of the living legend and giant in the world of
statistics, Prof. C.R. Rao. This article is a partial reflection
of Dr. Rao’s contributions to statistical theory and methodology, including  sufficiency, efficiency of estimation, as well as the application of matrix theory in linear statistical inference and beyond. In this paper, we extend many results from linear  models to nonlinear mixed effects (NLME) models which have been receiving notable
attention in recent statistical literature, mainly due to their flexibility for dealing with longitudinal data and repeated measures data. In a NLME framework it
is routinely assumed that the random effects and the within--subject
measurement errors follow a normal distribution. While this assumption
makes the model easy to apply in widely used software (such as R and SAS), its accuracy is difficult to check and the routine use of normality has been questioned by many authors. For example,
\cite{HartforDav200} showed through simulations that inference
based on the normal distribution can be sensitive to underlying
distributional and model misspecification. \cite{Litiere2007} showed
the impact of misspecifying the random effects distribution on the
estimation and hypothesis testing  in generalized linear mixed
models. Specifically, they showed that the maximum likelihood
estimators are inconsistent in the presence of misspecification and
{that the estimates of the variance components are severely biased}.
More recently, \cite{hui2020random} showed through theory and simulation that under misspecification, standard likelihood ratio tests of truly non‐zero variance components can suffer from severely inflated type I errors, and confidence intervals for the variance components can exhibit considerable under coverage.
Thus it is of practical interest to explore frameworks with considerable flexibility in the
distributional assumptions of the random effects as well as the
error terms, which can produce more reliable inferences.

There has been considerable work in mixed effects models in this
direction. \cite{VerbeLes1996} introduced a heterogeneous linear
mixed model (LMM) where the random effects distribution is relaxed
using normal mixtures. \cite{Pinheiro01} and \cite{lin2017multivariate} proposed a multivariate
Student-$t$ linear and nonlinear (T--LMM/NLMM) mixed model, respectively, and showed that it performs
well in the presence of outliers. \cite{ZhangDavi} proposed a LMM in
which the random effects follow a so--called semi--nonparametric 
distribution. \cite{Rosa2003} adopted a Bayesian framework to
carry out posterior analysis in LMM with the thick--tailed class of
normal/independent distributions. Moreover, \cite{Lachos_Ghosh_Arellano_2009} proposed a
skew--normal independent linear mixed model based on the
scale mixtures of skew--normal (SMSN) family introduced by
\cite{Branco_Dey01}, developing a general EM--type
algorithm for maximum likelihood estimation (MLE).  

In the nonlinear context, \cite{lachos2013bayesian} considered the Bayesian estimation of NLME models with scale mixtures of normal (SMN) distributions for the error term and random effects, 
\cite{lachos2011linear} developed a Bayesian framework for censored linear and nonlinear mixed effects models replacing the Gaussian assumptions for the random terms with SMN  distributions, and \cite{delacruz2014bayesian} also considered a Bayesian framework to estimate NLME models under heavy-tailed distributions, allowing the mixture variables associated with errors
and random effects to be different. From a frequentist perspective, \cite{meza2012estimation} proposed an estimation procedure to obtain the maximum likelihood estimates for NLME models with NI distributions, and \cite{galarza2020quantile} developed a likelihood-based approach for estimating quantile regression models with correlated continuous longitudinal data using the asymmetric Laplace distribution, both using a stochastic approximation of the EM algorithm.
Furthermore, \cite{Cibele09} and \cite{pereira2019nonlinear} considered a NLME model with skewed and heavy-tailed distributions, with the limitation that the nonlinearity is incorporated only in the fixed effects. 

Extending the work of \cite{Lachos_Ghosh_Arellano_2009}, in this
paper we propose a parametric robust modeling of NLME models based
on SMSN distributions. In particular, we assume a mean--zero SMSN
distribution for the random effects, and a SMN distribution for the
within--subject errors. Together, the observed responses follow
conditionally an approximate SMSN distribution and define what we
call a scale mixtures of skew--normal nonlinear mixed effects (SMSN--NLME)
model. In particular, the SMSN distributions provide a class of
skew--thick--tailed distributions that are useful for robust
inference and that contains as proper elements the skew--normal
(SN),  skew--$t$ (ST), skew--slash (SSL), and the skew--contaminated normal (SCN) distributions. The marginal density of the response variable 
can be obtained by approximations, leading to a computationally efficient
approximate (marginal) likelihood function that can be implemented
directly by using existing statistical software. 
The hierarchical representation of the proposed model makes the implementation of an efficient EM--type algorithm possible, which results in ``closed form'' expressions for the E and M--steps.

{The rest of the article is organized as follows. The SMSN--NLME
model is presented in Section \ref{sec2}, including a brief introduction to
the class of SMSN distributions and the approximate likelihood-based methodology for
inference in our proposed model. In Section \ref{sec3} we propose an EM-type
algorithm for approximate likelihood inferences in SMSN--NLME models,
which maintains the simplicity and stability of the EM--type
algorithm proposed by \cite{Lachos_Ghosh_Arellano_2009}.
In Section \ref{sec4}, simulation studies are conducted to evaluate the empirical performance of the proposed model. The advantage of the proposed methodology is
illustrated through the Theophylline kinetics data in Section \ref{sec5}.
Finally, some concluding remarks are presented in Section \ref{sec6}.}

\section{The model and approximate likelihood}\label{sec2}

\subsection{SMSN distributions and main notation}

The idea of the SMSN distributions originated from an early work by
\cite{Branco_Dey01}, which included the skew--normal (SN)
distribution as a special case. We say that a $p\times 1$ random vector
$\textbf{Y}$  follows a SN distribution with $p\times 1$ location
vector $\bmu$, $p\times p$ positive definite dispersion matrix
$\bSigma$ and $p\times 1$ skewness parameter vector $\blambda,$ and
write $\textbf{Y}\sim SN_p(\bmu,\bSigma,\blambda),$ if its
probability density function (pdf) is given by
\begin{eqnarray}\label{denSN}
f(\mathbf{y})&=& 2{\phi_p(\mathbf{y};\bmu,\bSigma)
\Phi(\blambda^{\top}\y_0)},
\end{eqnarray}
where $\y_0=\bSigma^{-1/2}(\mathbf{y}-\bmu)$,
$\phi_p(.;\bmu,\bSigma)$ stands for the pdf of the $p$--variate
normal distribution with mean vector $\bmu$ and dispersion matrix
$\bSigma$, $N_p(\bmu,\bSigma)$ say, and $\Phi(.)$ is the cumulative
distribution function (cdf) of the standard univariate normal.  Letting
$\Z=\Y-\bmu$ and noting that $a\Z\sim SN_p(\textbf{0},a^2\bSigma,\blambda)$ for all scalar $a>0$,
we can define a SMSN distribution as that of a $p-$dimensional random vector
\begin{equation}\label{stoNI} \Y=\bmu+U^{-1/2}\Z,
\end{equation}
where $U$ is a positive random variable with the cdf $H(u;\bnu)$ and
pdf $h(u;\bnu)$, and independent of the
$SN_p(\textbf{0},\bSigma,\blambda)$ random vector $\Z$, with $\bnu$
being a scalar or vector parameter indexing the distribution of the
mixing scale factor  $U$. Given $U=u$, $\Y$ follows a multivariate
skew--normal distribution with location vector $\bmu$, scale matrix
$u^{-1}\bSigma$ and skewness parameter vector $\blambda$. Thus, by
(\ref{denSN}), the marginal pdf of $\Y$ is
\begin{eqnarray}\label{denSNI}
f(\mathbf{y})&=&
2\int^{\infty}_0{\phi_p(\mathbf{y};\bmu,u^{-1}\bSigma)
\Phi(u^{1/2}\blambda^{\top}\y_0)}dH(u;\bnu).
\end{eqnarray}
The notation $\mathbf{Y}\sim \SNI_p(\bmu,\bSigma,\blambda;H)$ will
be used when $\Y$ has pdf (\ref{denSNI}).

The class of SMSN distributions includes the skew--$t$, skew--slash, and skew--contaminated normal, which will be briefly introduced subsequently. All these distributions have heavier tails than the skew-normal and can
be used for robust inferences. When $\blambda=\mathbf{0}$, the SMSN
distributions reduces to the SMN class, i.e.,
the class of scale--mixtures of the normal distribution, which is represented
by the pdf
$f_0(\mathbf{y})=\int^{\infty}_0{\phi_p(\mathbf{y};\bmu,u^{-1}\bSigma)
}dH(u;\bnu)$ and will be denoted by $\textrm{\NI}_p(\bmu,\bSigma,H)$. We refer to \cite{Lachos_Ghosh_Arellano_2009} for
details and additional properties related to this class of
distributions.

\begin{itemize}
\item {\it Multivariate skew--$t$ distribution}

The multivariate skew--$t$ distribution with $\nu$  degrees of
freedom, denoted by $\ST_{{p}}(\bmu,\bSigma,\mathbf{\blambda};\nu)$, can be
derived from the mixture model (\ref{denSNI}), by taking $U\sim$
$Gamma(\nu/2,\nu/2),$ $\nu>0.$ The pdf of $\Y$ is
\begin{equation}
f(\mathbf{y})=2t_p(\mathbf{y};\bmu,\bSigma,\nu)T\left(\sqrt{\frac{\nu+p}{\nu+d}}\A;\nu+p\right),\,\,\,\,\,\mathbf{y}\in
\mathbb{R}^p,\label{lsdefAB1}
\end{equation}
where $t_p(\cdot;\bmu,\bSigma,\nu)$ and $T(\cdot;\nu)$ denote,
respectively, the pdf of the $p$--variate Student--$t$ distribution,
namely $t_p(\bmu,\bSigma,\nu)$, and the cdf of the standard
univariate $t$--distribution,
$\A=\blambda^{\top} \bSigma^{-1/2} (\mathbf{y}-\bmu)$ and
$d=(\y-\bmu)^{\top}\bSigma^{-1}(\y-\bmu)$ is the Mahalanobis
distance.

\item {\it Multivariate skew--slash distribution}

Another SMSN distribution, termed as the multivariate skew--slash
distribution and denoted by $\SSL_p(\bmu,\bSigma,\blambda;\nu)$,
arises when the distribution of $U$ is $Beta(\nu,1)$, $\nu>0$. Its
pdf is given by
\begin{equation}\label{densSSL}
f(\mathbf{y})=2\nu\int^1_0u^{\nu-1}\phi_p(\mathbf{y};\bmu,u^{-1}\bSigma)\Phi(u^{1/2}
\A)du,\,\,\,\,\,\mathbf{y}\in \mathbb{R}^p.
\end{equation}
The skew--slash distribution reduces to the skew--normal
distribution as $\nu\uparrow \infty$.

\item {\it Multivariate skew--contaminated normal distribution}

The multivariate skew--contaminated normal distribution, denoted by
$SCN_p(\bmu,\bSigma,\mathbf{\blambda};\nu_1,\nu_2),$ arises when
the mixing scale factor $U$ is a discrete random variable taking one
of two values. The pdf of $U$, given a parameter vector
$\bnu=(\nu_1,\nu_2)^{\top}$, is
\begin{equation}\label{uSNC} h(u;\bnu)=\nu_1
\mathbb{I}_{(u=\nu_2)}+(1-\nu_1)
\mathbb{I}_{(u=1)},\,\,\,0<\nu_1<1,\,0<\nu_2< 1.\end{equation} It
follows that
\begin{eqnarray}
f(\mathbf{y})&=&2\left\{\nu_1\phi_p(\mathbf{y};\bmu,\nu_2^{-1}{\bSigma})
\Phi(\nu_2^{1/2}\A)\nonumber+(1-\nu_1)
\phi_p(\mathbf{y};\bmu,\bSigma)\Phi(\A)\right\}.
\end{eqnarray}
\end{itemize}

\subsection{The SMSN--NLME model}

In this section, we present the general NLME model proposed in this work, in which the random terms are assumed to follow a SMSN distribution within the class defined in \eqref{stoNI}. The model, denoted by SMSN--NLME, can be defined as follows:
\begin{eqnarray}
\textbf{Y}_i&=&\eta(\bphi_i,\X_i)+\bepsilon_i,\,\,\,\bphi_i=\A_i\bbeta+\ba_i,\label{modeleq11}
\end{eqnarray}
{with the assumption that}
\begin{equation}\label{modSnmis1} \left(\begin{array}{c}
        \ba_i \\
        \bepsilon_i
      \end{array}
\right)\buildrel ind.\over\sim
\SNI_{q+n_i}\left(\left(\begin{array}{c}
                               c\bDelta \\
                               \mathbf{0}
                             \end{array}
\right),\left(\begin{array}{cc}
           \mathbf{D} & \mathbf{0} \\
           \mathbf{0} & \sigma^2_e \mathbf{I}_{n_i}
         \end{array}
\right),\left(\begin{array}{c}
                               \mathbf{\blambda} \\
                               \mathbf{0}
                             \end{array}
\right) ;H \right),\,\,\ii,
\end{equation}
where the subscript $i$ is the subject index,
$\textbf{Y}_i=(y_{i1},\cdots,y_{in_i})^{\top}$ is an $n_i\times 1$
vector of observed continuous responses for subject $i$,
$\eta$ represents a nonlinear vector-valued differentiable function of the individual mixed effects parameters $\bphi_i$, $\X_i$ is an $n_i\times q$ matrix of covariates, $\bbeta$ is
a $p\times 1$ vector of fixed effects, $\be_i$ is a $q$-dimensional
random effects vector associated with the $i$th subject,
$\textbf{A}_i$ is a $q\times p$ design matrix
that possibly depends on elements of $\X_i$, $\bepsilon_i$ is
the $n_i\times 1$ vector of random errors,
$c=c(\bnu)=-\sqrt{\frac{2}{\pi}} k_1$, with $k_{1} = E\{U^{-1/2}\}$, and $\bDelta = \textbf{D}^{1/2} \bdelta$, with $\bdelta={\blambda}/{\sqrt{1+\blambda^{\top}\blambda}}$. The dispersion matrix
$\textbf{D}=\textbf{D}(\balpha)$ depends on unknown and reduced
parameter vector $\balpha$. Finally, as was indicated in the
previous section, $H=H(\cdot|\bnu)$ is the cdf-generator that
determines the specific $\SNI$ model that is considered.

\vspace*{0.1in} \noindent\texttt{\bf Remarks:}

\begin{itemize}
    \item[\bf i)] The model  defined in (\ref{modeleq11}) can be
viewed as a slight modification of the general NLME model proposed by
\cite{PinheiroBates95} and \cite{PinheiroBates2000}, with the
restriction that our new model does not allow to incorporate, for
instance, {\it ``time-varing'' } covariates in the random effects. This assumption is made for simplicity of theoretical derivations. However, the methodology proposed here can be extended
without any difficulty.

\item[\bf ii)] An attractive and convenient way to specify
(\ref{modSnmis1}) is the following:
\begin{equation}\label{condMod}
\textbf{b}_i|U_i=u_i\buildrel ind.\over\sim
\SN_q(c\bDelta,u_i^{-1}\mathbf{D},\blambda),\;\bepsilon_i|U_i=u_i
\buildrel ind.\over\sim \N_{n_i}(\mathbf{0},\sigma^2_e
u_i^{-1}\mathbf{I}_{n_i}),
\end{equation}
which are independent, where $U_i \iid H$.
Since for each $i=1,\ldots,n,$ $\textbf{b}_i$ and $\bepsilon_i$ are
indexed by the same scale mixing factor $U_i,$ they are not
independent in general. Independence corresponds to the case when
$U_i=1$ $(\ii),$ so that the SMSN--NLME model reduces to the SMN--NLME
model as defined in \cite{lachos2013bayesian}. However, conditional on
$U_i,$ $\textbf{b}_i$ and $\bepsilon_i$ are independent for each
$i=1,\ldots,n,$ which implies that $\mathbf{b}_i$ and $\bepsilon_i$
are uncorrelated, since
$Cov(\mathbf{b}_i,\bepsilon_i)=\E\{\mathbf{b}_i\bepsilon^{\top}_i\}
=\E\{\E\{\mathbf{b}_i\bepsilon^{\top}_i|U_i\}\}=\mathbf{0}$. Thus,
it follows from (\ref{modSnmis1})-(\ref{condMod}) that marginally
\begin{equation}
\be_i \iid \SNI_q(c\bDelta,\mathbf{D},\blambda;
H)\quad\textrm{and}\quad\bepsilon_i\buildrel ind.\over\sim
\NI_{n_i}(\mathbf{0},\sigma^2_e \mathbf{I}_{n_i}; H),\quad\ii.
 \label{modeleq2}
\end{equation}
Moreover, as long as $k_{1}<\infty$ the chosen location parameter ensures that $E\{\be_i\}=E\{\bepsilon_i\}=\mathbf{0}$. Thus, this model
considers that the within-subject random errors
are symmetrically distributed, while the distribution of random effects is assumed to be asymmetric and to have mean zero.

\item[\bf iii)] Our model can be seen as an extension of the
elliptical NLME model proposed by \cite{Cibele09}, where the
nonlinearity is incorporated only in the fixed effects. If
$\eta(.)$ is a linear function of the individual mixed effects parameters
$\bphi_i$, then the SMSN--NLME model reduces to a slight modification
of the SNI--LME model proposed by \cite{Lachos_Ghosh_Arellano_2009}.
However, since in this work we consider a mean-zero SMSN distribution for the random effects,
the result given in \cite{Lachos_Ghosh_Arellano_2009} cannot be directly applied.
One the other hand, this choice of location parameter is important, since $E\{\be_i\}\neq 0$ might lead to biased estimates of the fixed effects \citep{schumacher2020scale}.

\item[\bf iv)] The SMSN-NLME model defined in (\ref{modeleq11})-(\ref{modSnmis1}) can be formulated with a hierarchical representation, as follows:
\begin{eqnarray} \textbf{Y}_i|\be_i,U_i=u_i&\ind&
N_{n_i}(\eta(\A_i\bbeta+\be_i,\X_i),u_i^{-1}\sigma_e^2\mathbf{I}_{n_i}),\label{model0}\\
\be_i|U_i=u_i&\ind & SN_{q}(c\bDelta,
u_i^{-1}\bD,\blambda),\label{modelaa}\\
U_i&\iid& H(\cdot;\bnu)\label{modelbb}.
\end{eqnarray}

\end{itemize}

Let
$\btheta=(\bbeta^{\top},\sigma_e^2,\balpha^{\top},\blambda^{\top},\bnu^\top)^{\top}$, then
classical inference on the parameter vector $\btheta$ is based on
the marginal distribution of
$\mathbf{Y}=(\Y^{\top}_1,\ldots,\Y^{\top}_n)^\top$
\citep{PinheiroBates95}. Thus, from the hierarchical representation in (\ref{model0})-(\ref{modelbb}), the integrated likelihood for $\btheta$ based on the observed sample $\yp = (\yp^{\top}_1,\ldots,\yp^{\top}_n)^\top$ in this case is given
by
\begin{eqnarray}
L(\btheta\mid \yp)&=&2 \,\prod^n_{i=1}\int^{\infty}_{0}\int_{\mathbb{R}^q}
\phi_{n_i}(\mathbf{y}_i;\eta(\A_i\bbeta+\be_i,\X_i),u_i^{-1}\sigma^2_e
\mathbf{I}_{n_i})\phi_q(\be_i;c\bDelta,u^{-1}_i\mathbf{D})\nonumber\\
&&\times \,\Phi(u_i^{1/2}\blambda^{\top}\bD^{-1/2}(\be_i-c\bDelta))d\be_idH(u_i;\bnu),\label{cor1eq}
\end{eqnarray}
which generally does not have a closed form expression because the
model function is nonlinear in the random effect. In the normal
case, in order to make the numerical optimization of the likelihood
function a tractable problem, different approximations to
(\ref{cor1eq}) have been proposed, usually based on first-order Taylor series expansion of the model function around the conditional mode of the random effects
\citep{Lindstrombates90}. Following this idea, we describe next 
two important results based on Taylor series approximation method
for approximating the likelihood function of a SMSN--NLME model. The
first uses a point in a neighborhood of $\ba_i$ as the expansion
point. The second uses simultaneously a neighborhood of $\ba$ and
$\bbeta$ as expansions points, with the advantage that this approximation is completely linear (in $\bbeta$ and $\ba$). These approximations can be considered as extensions of the result given in \cite{Lindstrombates90}, \cite{lin2017multivariate}, \cite{matos2013likelihood} and \cite{PinheiroBates95}.

\begin{theorem}\label{prop1} Let $\widetilde{\ba}_i$ be an
expansion point in a neighborhood of $\ba_i$, for $\ii$. Then,
under the SMSN--NLME model as given in (\ref{modeleq11})--(\ref{modSnmis1}),
the marginal distribution of $\mathbf{Y}_i$ can be approximated as follows:
\begin{equation}
\Y_i\app
\SNI_{n_i}\left(\eta(\A_i{\bbeta}+\widetilde{\ba}_i,\X_i)-\widetilde{\bH}_i(\widetilde{\ba}_i-c\bDelta),\widetilde{\bPsi}_i,\widetilde{\bar{\blambda}};H\right),
\end{equation}
where $\widetilde{\bPsi}_i=\widetilde{\bH}_i \bD
\widetilde{\bH}_i^{\top}+\sigma^2_e \mathbf{I}_{n_i}$,
$\widetilde{\bH}_i=\displaystyle\frac{\partial
\eta(\A_i{\bbeta}+{\ba}_i,\X_i)}{\partial{\ba}^{\top}_i}|_{\ba_i=\widetilde{\ba}_i},$
$\widetilde{\bar{\blambda}}_{i}=
\displaystyle\frac{\widetilde{\bPsi}_i^{-1/2}\mathbf{\widetilde{H}}_i\mathbf{D}{\bzeta}}
{\sqrt{1+\bzeta^{\top}\widetilde{\bLambda}_i\bzeta}},$
with $ \bzeta=\mathbf{D}^{-1/2}\blambda,
\widetilde{\bLambda}_i=\left(\mathbf{D}^{-1}+\sigma_e^{-2}\mathbf{\widetilde{H}}_i^{\top}
\mathbf{\widetilde{H}}_i\right)^{-1}$, and $`` \app"$ denotes approximated
in distribution.
\end{theorem}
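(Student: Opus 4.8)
The plan is to condition on the mixing variable $U_i$, linearize the model function by a first--order Taylor expansion (in the spirit of \cite{Lindstrombates90} and \cite{PinheiroBates95}), recognize the linearized response as a linear function of a conditionally skew--normal random vector, and finally integrate $U_i$ out against $H$. First I would fix $U_i=u_i$; by the hierarchical representation (\ref{model0})--(\ref{modelbb}) we then have $\ba_i\mid U_i=u_i\sim\SN_q(c\bDelta,u_i^{-1}\bD,\blambda)$ and $\bepsilon_i\mid U_i=u_i\sim\N_{n_i}(\mathbf{0},u_i^{-1}\sigma_e^2\mathbf{I}_{n_i})$, independently. Expanding $\eta(\A_i\bbeta+\ba_i,\X_i)$ to first order about $\ba_i=\widetilde{\ba}_i$ and rearranging gives
\begin{equation*}
\Y_i\;\approx\;\eta(\A_i\bbeta+\widetilde{\ba}_i,\X_i)-\widetilde{\bH}_i(\widetilde{\ba}_i-c\bDelta)+\bigl[\widetilde{\bH}_i(\ba_i-c\bDelta)+\bepsilon_i\bigr],
\end{equation*}
which already isolates the claimed location vector and reduces the task to identifying the conditional law of $\widetilde{\bH}_i(\ba_i-c\bDelta)+\bepsilon_i$ given $U_i=u_i$.

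Next I would invoke the well--known stochastic representation of the multivariate skew--normal distribution: conditionally on $U_i=u_i$,
\begin{equation*}
\ba_i-c\bDelta\;\stackrel{d}{=}\;u_i^{-1/2}\Bigl(\bDelta\,|T_{0i}|+\bD^{1/2}(\mathbf{I}_q-\bdelta\bdelta^\top)^{1/2}\mathbf{T}_{1i}\Bigr),
\end{equation*}
with $T_{0i}\sim\N(0,1)$ and $\mathbf{T}_{1i}\sim\N_q(\mathbf{0},\mathbf{I}_q)$ independent, using $\bD^{1/2}\bdelta=\bDelta$; and $\bepsilon_i\stackrel{d}{=}u_i^{-1/2}\sigma_e\mathbf{T}_{2i}$ with $\mathbf{T}_{2i}\sim\N_{n_i}(\mathbf{0},\mathbf{I}_{n_i})$ independent. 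Hence the stochastic part of the linearized response equals $u_i^{-1/2}\bigl(\widetilde{\bH}_i\bDelta\,|T_{0i}|+\mathbf{V}_i\bigr)$, where $\mathbf{V}_i$ is a zero--mean Gaussian vector independent of $T_{0i}$ with covariance $\bGamma_i=\widetilde{\bH}_i\bD\widetilde{\bH}_i^\top-\widetilde{\bH}_i\bDelta\bDelta^\top\widetilde{\bH}_i^\top+\sigma_e^2\mathbf{I}_{n_i}=\widetilde{\bPsi}_i-\widetilde{\bH}_i\bDelta\bDelta^\top\widetilde{\bH}_i^\top$. A vector of the form (fixed vector)$\cdot|T_0|$ plus an independent Gaussian is skew--normal: reading the representation backwards, $\widetilde{\bH}_i\bDelta\,|T_{0i}|+\mathbf{V}_i\sim\SN_{n_i}(\mathbf{0},\bGamma_i+\widetilde{\bH}_i\bDelta\bDelta^\top\widetilde{\bH}_i^\top,\blambda^\ast)=\SN_{n_i}(\mathbf{0},\widetilde{\bPsi}_i,\blambda^\ast)$, where $\blambda^\ast$ is recovered from $\bdelta^\ast=\widetilde{\bPsi}_i^{-1/2}\widetilde{\bH}_i\bDelta$ through $\blambda^\ast=\bdelta^\ast/\sqrt{1-\bdelta^{\ast\top}\bdelta^\ast}$. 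Absorbing the factor $u_i^{-1/2}$ into the dispersion via the scaling property $a\Z\sim\SN_p(\mathbf{0},a^2\bSigma,\blambda)$ noted in Section~\ref{sec2}, this yields $\Y_i\mid U_i=u_i\app\SN_{n_i}\bigl(\eta(\A_i\bbeta+\widetilde{\ba}_i,\X_i)-\widetilde{\bH}_i(\widetilde{\ba}_i-c\bDelta),\,u_i^{-1}\widetilde{\bPsi}_i,\,\widetilde{\bar{\blambda}}_i\bigr)$.

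The remaining, and only genuinely computational, point is to verify that the skewness vector just obtained, $\widetilde{\bPsi}_i^{-1/2}\widetilde{\bH}_i\bDelta/\sqrt{1-\bDelta^\top\widetilde{\bH}_i^\top\widetilde{\bPsi}_i^{-1}\widetilde{\bH}_i\bDelta}$, coincides with the stated $\widetilde{\bar{\blambda}}_i=\widetilde{\bPsi}_i^{-1/2}\widetilde{\bH}_i\bD\bzeta/\sqrt{1+\bzeta^\top\widetilde{\bLambda}_i\bzeta}$. Substituting $\bDelta=\bD^{1/2}\bdelta=\bD\bzeta/\sqrt{1+\blambda^\top\blambda}$ (which follows from $\bzeta=\bD^{-1/2}\blambda$ and $\bdelta=\blambda/\sqrt{1+\blambda^\top\blambda}$) turns the numerator into $\widetilde{\bPsi}_i^{-1/2}\widetilde{\bH}_i\bD\bzeta/\sqrt{1+\blambda^\top\blambda}$; cancelling the common scalar and using $\blambda^\top\blambda=\bzeta^\top\bD\bzeta$, the claimed equality reduces to $\bzeta^\top\widetilde{\bLambda}_i\bzeta=\bzeta^\top\bigl(\bD-\bD\widetilde{\bH}_i^\top\widetilde{\bPsi}_i^{-1}\widetilde{\bH}_i\bD\bigr)\bzeta$, hence to
\begin{equation*}
\bigl(\bD^{-1}+\sigma_e^{-2}\widetilde{\bH}_i^\top\widetilde{\bH}_i\bigr)^{-1}=\bD-\bD\widetilde{\bH}_i^\top\bigl(\sigma_e^2\mathbf{I}_{n_i}+\widetilde{\bH}_i\bD\widetilde{\bH}_i^\top\bigr)^{-1}\widetilde{\bH}_i\bD,
\end{equation*}
which is exactly the Woodbury (Sherman--Morrison--Woodbury) matrix identity. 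This reconciliation, together with the careful bookkeeping of the $\sqrt{1+\blambda^\top\blambda}$ normalizations, is where I expect the bulk of the (routine) effort to lie; everything else is conceptual. Finally I would integrate out $U_i$: since $\Y_i\mid U_i=u_i$ is approximately $\SN_{n_i}(\bmu_i^\ast,u_i^{-1}\widetilde{\bPsi}_i,\widetilde{\bar{\blambda}}_i)$ with $\bmu_i^\ast=\eta(\A_i\bbeta+\widetilde{\ba}_i,\X_i)-\widetilde{\bH}_i(\widetilde{\ba}_i-c\bDelta)$ and $U_i\sim H(\cdot;\bnu)$, integrating the conditional density against $dH(u_i;\bnu)$ reproduces a density of the form (\ref{denSNI}), i.e. $\Y_i\app\SNI_{n_i}(\bmu_i^\ast,\widetilde{\bPsi}_i,\widetilde{\bar{\blambda}}_i;H)$, which is the assertion.
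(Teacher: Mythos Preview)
Your argument is correct. The overall architecture matches the paper's proof---linearize via a first-order Taylor expansion, identify the conditional law of the linearized response given $U_i$, then mix over $U_i$---but the way you handle the marginalization over $\ba_i$ is genuinely different. The paper writes the approximate density of $\Y_i$ as the double integral
\[
f(\y_i)\approx 2\int_0^\infty\!\!\int_{\mathbb{R}^q}\phi_{n_i}(\y_i;\cdot)\,\phi_q(\be_i;c\bDelta,u_i^{-1}\bD)\,\Phi(u_i^{1/2}\blambda^\top\bD^{-1/2}(\be_i-c\bDelta))\,d\be_i\,dH(u_i;\bnu)
\]
and then dispatches the inner integral by citing Lemmas~1 and~2 of \cite{ArellanoLachos2005}, which compute the marginal of a normal observation equation with a skew--normal latent vector and deliver the location, scale, and skewness parameters in one stroke. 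You instead work at the level of random vectors: you invoke the additive (half--normal plus Gaussian) stochastic representation of the skew--normal, push it through the linear map $\widetilde{\bH}_i$, add the Gaussian error, and read off the skew--normal law of the sum directly. Your route is more self--contained and makes explicit where the claimed form of $\widetilde{\bar{\blambda}}_i$ comes from (the Woodbury identity you isolate is precisely what is hidden inside the cited lemmas), at the cost of carrying the $\sqrt{1+\blambda^\top\blambda}$ bookkeeping by hand. The paper's route is shorter on the page but less transparent, since the algebraic content is outsourced to the reference. Either way the result is the same, and your verification that $\bdelta^\ast/\sqrt{1-\bdelta^{\ast\top}\bdelta^\ast}$ agrees with the stated $\widetilde{\bar{\blambda}}_i$ via Woodbury is exactly the computation one would have to redo if one unpacked those lemmas.
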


\begin{proof}
For simplicity we omit the sub-index $i$. Thus, for $\bbeta$ fixed
and based on first-order Taylor expansion of the function $\eta$
around $\widetilde{\be}$, we have from (\ref{modeleq11}) that
$$\bepsilon=\Y-\eta(\A\bbeta+\be,\X)\approx \Y-\left[\eta(\A\bbeta+\widetilde{\be},\X)+\widetilde{\bH}(\be-\widetilde{\be})\right].$$
Then from (\ref{modeleq11}) and (\ref{modeleq2})
$$\Y-\left[\eta(\A\bbeta+\widetilde{\be},\X)+\widetilde{\bH}\be-\widetilde{\bH}\widetilde{\be}\right]~|~\be\app \NI_{n}(\mathbf{0},\sigma^2_e \mathbf{I} , H),$$
and the approximate conditional distribution of $\Y$ is
$$\Y~|~\be\app \NI_{n}(\eta(\A\bbeta+\widetilde{\be},\X)-\widetilde{\bH}\widetilde{\be}+\widetilde{\bH}{\be},\sigma^2_e \mathbf{I} , H),$$
or equivalently
$$\Y~|~\be,u\app \N_{n}(\eta(\A\bbeta+\widetilde{\be},\X)-\widetilde{\bH}\widetilde{\be}+\widetilde{\bH}{\be},u^{-1}\sigma^2_e \mathbf{I} ).$$
The rest of the proof follows by noting that
\begin{eqnarray*}
f(\y)&\approx& 2\int^{\infty}_{0}\int_{\mathbb{R}^q} \phi_{n}(\yp_i;\eta(\A\bbeta+\widetilde{\be},\X)-\widetilde{\bH}\widetilde{\be}+\widetilde{\bH}{\be},u^{-1}\sigma^2_e \mathbf{I} )\phi_q(\be;c\bDelta,u^{-1}\mathbf{D})\nonumber\\
&&\times\Phi(u^{1/2}\blambda^{\top}\bD^{-1/2}(\be-c\bDelta))d\be
dH(u;\bnu),
\end{eqnarray*}
which can be easily solved by using successively Lemmas 1 and 2
given in \cite{ArellanoLachos2005}.
\end{proof}

\begin{theorem}\label{prop2} 
Let $\widetilde{\ba}_i$ and $\widetilde{\bbeta}$ be expansion points in
a neighborhood of $\ba_i$ and $\bbeta$, respectively, for  $\ii,$. Then, under the
SMSN--NLME model as given in (\ref{modeleq11})--(\ref{modSnmis1}), the
marginal distribution of $\mathbf{Y}_i,$ can be
approximated as
\begin{equation}\label{margT2}
\Y_i\app
\SNI_{n_i}\left(\widetilde{\eta}(\widetilde{\bbeta},\widetilde{\ba}_i)+\widetilde{\bW}_i{\bbeta}+c\widetilde{\bH}_i\bDelta,\widetilde{\bPsi}_i,\widetilde{\bar{\blambda}};H\right),
\end{equation}
where
$\widetilde{\eta}(\widetilde{\bbeta},\widetilde{\ba}_i)=\eta(\A_i\widetilde{\bbeta}+\widetilde{\ba}_i,\X_i)-\widetilde{\bH}_i\widetilde{\be}_i-\widetilde{\bW}_i\widetilde{\bbeta}$,
\,$\widetilde{\bPsi}_i=\widetilde{\bH}_i \bD
\widetilde{\bH}_i^{\top}+\sigma^2_e \mathbf{I}_{n_i},$
\,$\widetilde{\bar{\blambda}}_{i}=\displaystyle\frac{\widetilde{\bPsi}_i^{-1/2}\mathbf{\widetilde{H}}_i\mathbf{D}\bzeta}
{\sqrt{1+\bzeta^{\top}\widetilde{\bLambda}_i\bzeta}},$
$\widetilde{\bH}_i=\displaystyle\frac{\partial
\eta(\A_i{\widetilde{\bbeta}}+{\ba}_i,\X_i)}{\partial{\ba}^{\top}_i}|_{\ba_i=\widetilde{\ba}_i},$
$\widetilde{\bW}_i=\displaystyle\frac{\partial
\eta(\A_i{{\bbeta}}+\widetilde{\ba}_i,\X_i)}{\partial{\bbeta}^{\top}}|_{\bbeta=\widetilde{\bbeta}},$
with $\bzeta=\mathbf{D}^{-1/2}\blambda,
\widetilde{\bLambda}_i=\left(\mathbf{D}^{-1}+\sigma_e^{-2}\mathbf{\widetilde{H}}_i^{\top}
\mathbf{\widetilde{H}}_i\right)^{-1}.$
\end{theorem}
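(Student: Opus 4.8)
The plan is to mirror the proof of Theorem~\ref{prop1}, the only structural difference being that the first--order Taylor expansion of $\eta$ is now taken jointly around $\widetilde{\ba}_i$ and $\widetilde{\bbeta}$, so that the resulting approximate model becomes completely linear in $(\bbeta,\ba_i)$. First I would expand, for each $i$,
\[
\eta(\A_i\bbeta+\ba_i,\X_i)\ \approx\ \eta(\A_i\widetilde{\bbeta}+\widetilde{\ba}_i,\X_i)+\widetilde{\bH}_i(\ba_i-\widetilde{\ba}_i)+\widetilde{\bW}_i(\bbeta-\widetilde{\bbeta}),
\]
with $\widetilde{\bH}_i,\widetilde{\bW}_i$ the Jacobians defined in the statement (note that $\widetilde{\bH}_i$ is now evaluated at $\bbeta=\widetilde{\bbeta}$, which is the only change from Theorem~\ref{prop1}). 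Substituting into $\Y_i=\eta(\A_i\bbeta+\ba_i,\X_i)+\bepsilon_i$ from (\ref{modeleq11}) and collecting the terms free of $\ba_i$ and $\bbeta$ into $\widetilde{\eta}(\widetilde{\bbeta},\widetilde{\ba}_i)=\eta(\A_i\widetilde{\bbeta}+\widetilde{\ba}_i,\X_i)-\widetilde{\bH}_i\widetilde{\ba}_i-\widetilde{\bW}_i\widetilde{\bbeta}$ yields the linearized model
\[
\Y_i\ \approx\ \widetilde{\eta}(\widetilde{\bbeta},\widetilde{\ba}_i)+\widetilde{\bW}_i\bbeta+\widetilde{\bH}_i\ba_i+\bepsilon_i .
\]

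Next I would condition on $U_i=u_i$ and invoke the hierarchical representation (\ref{condMod}): given $u_i$, one has $\ba_i\sim\SN_q(c\bDelta,u_i^{-1}\bD,\blambda)$ and $\bepsilon_i\sim\N_{n_i}(\mathbf{0},\sigma_e^2u_i^{-1}\mathbf{I}_{n_i})$ independently, while $\widetilde{\eta}(\widetilde{\bbeta},\widetilde{\ba}_i)+\widetilde{\bW}_i\bbeta$ is a fixed vector. Since $\widetilde{\bH}_i\ba_i+\bepsilon_i$ is then an affine image of a skew--normal vector perturbed by an independent full-rank Gaussian vector, the closure of the skew--normal family under this operation --- Lemmas~1 and 2 of \cite{ArellanoLachos2005}, applied exactly as in the proof of Theorem~\ref{prop1} --- gives
\[
\widetilde{\bH}_i\ba_i+\bepsilon_i\mid U_i=u_i\ \app\ \SN_{n_i}\!\left(c\,\widetilde{\bH}_i\bDelta,\ u_i^{-1}\widetilde{\bPsi}_i,\ \widetilde{\bar{\blambda}}_i\right),\qquad \widetilde{\bPsi}_i=\widetilde{\bH}_i\bD\widetilde{\bH}_i^{\top}+\sigma_e^2\mathbf{I}_{n_i}.
\]
Shifting by the fixed offset only changes the location parameter, so $\Y_i\mid U_i=u_i$ is approximately $\SN_{n_i}(\widetilde{\eta}(\widetilde{\bbeta},\widetilde{\ba}_i)+\widetilde{\bW}_i\bbeta+c\widetilde{\bH}_i\bDelta,\ u_i^{-1}\widetilde{\bPsi}_i,\ \widetilde{\bar{\blambda}}_i)$; mixing over $U_i\sim H(\cdot;\bnu)$ according to (\ref{stoNI})--(\ref{denSNI}) then delivers the claimed $\SNI_{n_i}$ approximation (\ref{margT2}). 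Equivalently, one may substitute the approximate conditional density of $\Y_i\mid\ba_i,u_i$ into the integrated likelihood (\ref{cor1eq}) and perform the $\ba_i$-- and $u_i$--integrations with those same two lemmas, precisely as done for Theorem~\ref{prop1}.

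The only point requiring genuine work is checking that the skewness vector produced by the closure step equals the $\widetilde{\bar{\blambda}}_i$ displayed in the statement, and this is identical to the verification already embedded in Theorem~\ref{prop1}, because $\widetilde{\bPsi}_i$ and $\widetilde{\bar{\blambda}}_i$ have the same algebraic form there and here. Concretely, writing the skew--normal law in terms of $\bdelta=\blambda/\sqrt{1+\blambda^{\top}\blambda}$ gives $\bar{\bdelta}_i=\widetilde{\bPsi}_i^{-1/2}\widetilde{\bH}_i\bDelta=\widetilde{\bPsi}_i^{-1/2}\widetilde{\bH}_i\bD\bzeta/\sqrt{1+\blambda^{\top}\blambda}$, using $\bDelta=\bD^{1/2}\bdelta$ and $\bD\bzeta=\bD^{1/2}\blambda$; then $\widetilde{\bar{\blambda}}_i=\bar{\bdelta}_i/\sqrt{1-\bar{\bdelta}_i^{\top}\bar{\bdelta}_i}$ collapses to the stated expression after using $\blambda^{\top}\blambda=\bzeta^{\top}\bD\bzeta$ together with the Woodbury identity $\bD-\bD\widetilde{\bH}_i^{\top}\widetilde{\bPsi}_i^{-1}\widetilde{\bH}_i\bD=(\bD^{-1}+\sigma_e^{-2}\widetilde{\bH}_i^{\top}\widetilde{\bH}_i)^{-1}=\widetilde{\bLambda}_i$. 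Thus the main --- and essentially only --- obstacle is the matrix bookkeeping tying $\widetilde{\bPsi}_i$, $\widetilde{\bLambda}_i$ and $\widetilde{\bar{\blambda}}_i$ together; every distributional step is immediate once Theorem~\ref{prop1} is in hand.
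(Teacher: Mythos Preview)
Your proposal is correct and follows essentially the same approach as the paper: perform the joint first-order Taylor expansion of $\eta$ around $(\widetilde{\bbeta},\widetilde{\ba}_i)$ to obtain the linearized hierarchical model, then integrate out $(\ba_i,u_i)$ using the same skew--normal closure lemmas invoked in Theorem~\ref{prop1}. The paper's proof is terser---it simply writes the approximate hierarchy and says ``the proof follows by integrating out $(\be,u)$''---whereas you spell out the conditional $\SN$ law for $\widetilde{\bH}_i\ba_i+\bepsilon_i$ and verify the form of $\widetilde{\bar{\blambda}}_i$ via the Woodbury identity, but the underlying argument is identical.
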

\begin{proof}
As in Theorem \ref{prop1}, and based on first-order Taylor expansion of the
function $\eta$ around $\widetilde{\be}$  and $\widetilde{\bbeta}$,
we have that
$$\bepsilon=\Y-\eta(\A\bbeta+\be,\X)\approx \Y-\left[\eta(\A\widetilde{\bbeta}+\widetilde{\be},\X)+
\widetilde{\bH}(\be-\widetilde{\be})+
\widetilde{\bW}(\bbeta-\widetilde{\bbeta})\right].$$
Hence,
\begin{eqnarray}\label{linearizacao}
\Y~|~\be,U=u&\app&
\NI_{n}(\eta(\A\widetilde{\bbeta}+\widetilde{\be},\X)-\widetilde{\bH}\widetilde{\be}-\widetilde{\bW}\widetilde{\bbeta}+\widetilde{\bH}{\be}+\widetilde{\bW}{\bbeta},u^{-1}\sigma^2_e
\mathbf{I} , H),\nonumber\\ \be|U=u&\ind & SN_{q}(c\bDelta,
u^{-1}\bD,\blambda),\label{modelHT1}\\
U&\iid& H(\cdot;\bnu)\nonumber,
\end{eqnarray}
and the proof follows by integrating out $(\be,u).$
\end{proof}

The estimates obtained by maximizing the approximate log-likelihood
function
$\ell(\btheta,\widetilde{\be})=\sumas\log{f(\yp_i;\btheta,\widetilde{\be}_i)}$
\,(or
$\ell(\btheta,\widetilde{\be},\widetilde{\bbeta})=\sumas\log{f(\yp_i;\btheta,\widetilde{\be}_i,\widetilde{\bbeta})}$)
 are thus approximate maximum likelihood estimates (MLEs), which can be computed directly through optimization procedures, such as {\it fmincon()} and {\it optim()} in {Matlab} and {R}, respectively. 
However, since numerical procedures for direct maximization of the approximate log-likelihood function
often present numerical instability and may not converge unless good starting values are used, in this paper we use the EM algorithm \citep{Dempster77} for obtaining approximate ML estimates via two modifications: the ECM algorithm \citep{Meng93} and the ECME algorithm \citep{Liu94}. 

Before discussing the EM implementation to obtain ML estimates of a SMSN--NLME model, we present the empirical Bayesian estimate of the random effects $\widetilde{\be}^{(k)}$, which will be used in the estimation procedure and is given in the following result. The notation used is that of Theorem \ref{prop2} and the conditional expectations $\widetilde{\tau}_{-1i}$ can be easily derived from the result of Section 2 in \cite{Lachos_Ghosh_Arellano_2009}.

\begin{theorem} Let $\widetilde{\mathbf{Y}}_i=\mathbf{Y}_i-\widetilde{\eta}(\widetilde{\bbeta},\widetilde{\ba}_i)$, for $\ii$. Then the approximated minimum mean-squared error (MSE) estimator (or empirical Bayes estimator) of $\mathbf{b}_i$ obtained by the conditional mean of $\mathbf{b}_i$ given $\widetilde{\Y}_i=\widetilde{\y}_i$ is
\begin{eqnarray}\label{MSQ}
{\widehat{\mathbf{b}}}_i(\btheta)&\approx&\E\{\mathbf{b}_i|\widetilde{\Y}_i=\widetilde{\y}_i,\btheta\}=\widetilde{\bmu}_{bi}+\frac{\widetilde{\tau}_{-1i}}{\sqrt{1+\bzeta^{\top}\widetilde{\bLambda}_i\bzeta}}\,\widetilde{\bLambda}_i\bzeta,
\end{eqnarray}
where
$\widetilde{\bmu}_{bi}=c\bDelta+\mathbf{D}\widetilde{\bH}^{\top}_i\widetilde{\bPsi}^{-1/2}_i\widetilde{\y}_{0i}$
and
$\widetilde{\tau}_{-1i}=\E\left\{U^{-1/2}W_{\Phi}(U^{1/2}\widetilde{\A}_i)|\widetilde{\y}\right\}$,
with $W_{\Phi}(x)=\phi_1(x)/\Phi(x),\,x\in \mathbb{R}$,\,
$\widetilde{\y}_{0i}=\widetilde{\bPsi}^{-1/2}_i(\widetilde{\y}_i-\widetilde{\bW}_i\bbeta-c\widetilde{\bH}_i\bDelta)$
and
$\widetilde{\A}_i=\widetilde{\bar{\blambda}}^{\top}_{i}\widetilde{\y}_{0i}.$
\end{theorem}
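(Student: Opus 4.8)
The plan is to build on the linearized hierarchy obtained in the proof of Theorem~\ref{prop2}. In the notation of the theorem, the recentered response $\widetilde{\Y}_i=\mathbf{Y}_i-\widetilde{\eta}(\widetilde{\bbeta},\widetilde{\ba}_i)$ satisfies, approximately, $\widetilde{\Y}_i\mid\be_i,U_i=u_i\app\N_{n_i}(\widetilde{\bW}_i\bbeta+\widetilde{\bH}_i\be_i,u_i^{-1}\sigma_e^2\mathbf{I}_{n_i})$ with $\be_i\mid U_i=u_i\ind\SN_q(c\bDelta,u_i^{-1}\bD,\blambda)$, that is, a conditionally Gaussian linear mixed model with skew--normal random effects. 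Since $\E\{\be_i\mid\widetilde{\Y}_i\}=\E\{\E\{\be_i\mid\widetilde{\Y}_i,U_i\}\mid\widetilde{\Y}_i\}$, I would first compute the inner conditional mean at a fixed $U_i=u_i$ and then average over the conditional law of $U_i$ given $\widetilde{\Y}_i$. To handle the skew--normal layer I would introduce a latent variable $T_i$ with $T_i\mid U_i=u_i$ distributed as an $\N(0,u_i^{-1})$ variate truncated to $(0,\infty)$, so that $\be_i\mid(U_i=u_i,T_i=t_i)\sim\N_q(c\bDelta+\bDelta t_i,u_i^{-1}(\bD-\bDelta\bDelta^{\top}))$; this is the standard convolution representation of the skew--normal, the same device that underlies Lemmas~1 and~2 of \cite{ArellanoLachos2005}.

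Conditional on $(U_i=u_i,T_i=t_i)$ the pair $(\be_i,\widetilde{\Y}_i)$ is jointly Gaussian, so $\E\{\be_i\mid\widetilde{\Y}_i,u_i,t_i\}$ follows from the usual Gaussian regression formula; it is affine in both $\widetilde{\y}_i$ and $t_i$, and because every block of the joint covariance carries the common factor $u_i^{-1}$, the regression coefficients do not depend on $u_i$. Next I would integrate out $T_i$: its conditional law given $(\widetilde{\Y}_i,u_i)$ is again a univariate normal truncated to $(0,\infty)$, whose mean equals its pre-truncation location plus a multiple of $u_i^{-1/2}W_{\Phi}(u_i^{1/2}\widetilde{\A}_i)$ by the standard truncated-normal moment formula --- this is exactly where $W_{\Phi}$ enters, and a short computation identifies the scalar argument as $u_i^{1/2}\widetilde{\A}_i$ with $\widetilde{\A}_i=\widetilde{\bar{\blambda}}^{\top}_{i}\widetilde{\y}_{0i}$. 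Collecting the $u_i$-free terms into $\widetilde{\bmu}_{bi}$ and the remaining term into the $W_{\Phi}$-channel gives $\E\{\be_i\mid\widetilde{\Y}_i,u_i\}=\widetilde{\bmu}_{bi}+u_i^{-1/2}W_{\Phi}(u_i^{1/2}\widetilde{\A}_i)\,\widetilde{\bLambda}_i\bzeta/\sqrt{1+\bzeta^{\top}\widetilde{\bLambda}_i\bzeta}$, and averaging over $U_i$ given $\widetilde{\Y}_i$ replaces $u_i^{-1/2}W_{\Phi}(u_i^{1/2}\widetilde{\A}_i)$ by $\widetilde{\tau}_{-1i}$, which gives exactly \eqref{MSQ}. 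The per-member expressions for $\widetilde{\tau}_{-1i}$ across the SMSN family are then read off from Section~2 of \cite{Lachos_Ghosh_Arellano_2009}.

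The routine but delicate part is the matrix bookkeeping needed to cast the answer in the stated form. The Gaussian regression at the $(u_i,t_i)$ level naturally produces quantities built from $\bGamma=\bD-\bDelta\bDelta^{\top}$ rather than from $\bD$, $\widetilde{\bPsi}_i$ and $\widetilde{\bLambda}_i$, and reconciling the two requires the Sherman--Morrison--Woodbury identities $\bGamma^{-1}=\bD^{-1}+\bzeta\bzeta^{\top}$ and $\widetilde{\bLambda}_i=\bD-\bD\widetilde{\bH}_i^{\top}\widetilde{\bPsi}_i^{-1}\widetilde{\bH}_i\bD$, together with the elementary relations $\bDelta=\bD\bzeta/\sqrt{1+\blambda^{\top}\blambda}$ and $\bzeta^{\top}\bD\bzeta=\blambda^{\top}\blambda$; the same identities also collapse the argument of $W_{\Phi}$ to $\widetilde{\A}_i$ and the $W_{\Phi}$-coefficient to $\widetilde{\bLambda}_i\bzeta/\sqrt{1+\bzeta^{\top}\widetilde{\bLambda}_i\bzeta}$. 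Finally one must carry the mean--zero location $c\bDelta$ through the Gaussian conditioning; this extra accounting, absent in the zero-location formulation, is precisely why the empirical Bayes formula of \cite{Lachos_Ghosh_Arellano_2009} does not transfer verbatim, and it is the only genuinely new ingredient relative to that reference.
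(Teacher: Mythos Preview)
Your proposal is correct and arrives at the same destination as the paper, but the route differs in one notable respect. The paper's proof works at the level $(\widetilde{\y}_i,u_i)$ only: it identifies the conditional law $\be_i\mid(\widetilde{\Y}_i=\widetilde{\y}_i,U_i=u_i)$ directly as an extended skew--normal distribution, writes down its density, and then reads off the conditional mean by invoking Lemma~2 of \cite{Lachos_Ghosh_Arellano_2009}; the outer expectation over $U_i\mid\widetilde{\y}_i$ then finishes the argument in one line. You instead introduce the additional latent layer $T_i$ (the truncated-normal stochastic representation of the skew--normal), condition further on $(u_i,t_i)$ to obtain a purely Gaussian regression, and then integrate $T_i$ back out via the half-normal mean formula before averaging over $U_i$. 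In effect you are rederiving the content of that Lemma~2 from first principles rather than quoting it. This buys you a more self-contained and elementary argument---everything reduces to Gaussian conditioning and one truncated-normal moment---at the price of the matrix bookkeeping you flag in your final paragraph (the Woodbury/Sherman--Morrison manipulations needed to pass from the $\bGamma$-parametrization back to $\bD$, $\widetilde{\bPsi}_i$, $\widetilde{\bLambda}_i$). The paper's shortcut avoids that algebra entirely because the EST mean formula is already stated in the target parametrization.
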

\begin{proof}
From (\ref{modelHT1}), it can be shown that the conditional  distribution of the
$\mathbf{b}_i$ given $(\widetilde{\Y}_i,U_i)=(\widetilde{\y}_i,
u_i)$ belongs to the extended skew--normal (EST) family of
distributions \citep{Azzalini99}, and its pdf is
$$f(\mathbf{b}_i|\widetilde{\mathbf{y}}_i,u_i,\btheta)=\frac{1}{\Phi(u_i^{1/2}\widetilde{\A}_i)}\,
\phi_q(\mathbf{b}_i;\widetilde{\bmu}_{bi},u^{-1}_i\widetilde{\bLambda}_i){\Phi(u_i^{1/2}\bzeta^{\top}(\mathbf{b}_i-c\bDelta))}.$$
Thus, from Lemma 2 in \cite{Lachos_Ghosh_Arellano_2009}, we have
that
$$
\E\{\mathbf{b}_i|\widetilde{\y}_i,u_i,\btheta\}=\widetilde{\bmu}_{bi}+
\frac{u^{-1/2}_iW_{\Phi}(u_i^{1/2}\widetilde{\A}_i)}{\sqrt{1+\bzeta^{\top}\widetilde{\bLambda}_i\bzeta}}
\widetilde{\bLambda}_i\bzeta,
$$
and the MSE estimator of $\mathbf{b}_i$, given by
$\E\{\ba_i|\widetilde{\y}_i,\btheta\}$, follows by the law of
iterative expectations.
\end{proof}


\section{Approximates ML estimates via the EM algorithm}\label{sec3}

Let the current estimate of $(\bbeta,\ba_i)$ be denote by
$(\widetilde{\bbeta},\widetilde{\ba}_i)$ and for simplicity hereafter we omit the symbol $``\sim"$ in $\bH_i$ and $\bW_i$. As in Theorem \ref{prop2}, the linearization procedure adopted in this section
consists of taking the first-order Taylor expansion of the nonlinear function
around the current parameter estimate $\widetilde{\bbeta}$ and
random effect estimate $\widetilde{\ba}_i$ at each iteration \citep{Wu2004,wu2009mixed}, which is equivalent to iteratively solving the LME model
\begin{equation}\label{aprox}
\widetilde{\mathbf{Y}}_i={\bW}_i\bbeta+{\bH}_i\ba_i+\bepsilon_i, \quad \ii,
\end{equation}
where $\widetilde{\mathbf{Y}}_i=\mathbf{Y}_i-\widetilde{\eta}(\widetilde{\bbeta},\widetilde{\ba}_i),$
$ \be_i\buildrel ind\over\sim \SNI_q(c\bDelta,\mathbf{D},\blambda,
H)$ {and} $\bepsilon_i\buildrel ind.\over\sim $ $
\NI_{n_i}(\mathbf{0},\sigma^2_e \mathbf{I}_{n_i} , H)$. 
A key feature of this model is that it can be formulated in a
flexible hierarchical representation that is useful for analytical
derivations. The model described in (\ref{aprox}) can be written as follows:
\begin{eqnarray} \widetilde{\textbf{Y}}_i|\textbf{b}_i,U_i=u_i\ind
N_{n_i}(\textbf{W}_i\bbeta+\textbf{H}_i\textbf{b}_i,u^{-1}_i\sigma_e^2\mathbf{I}_{n_i})&;&\,\,\textbf{b}_i|T_i=t_i,U_i=u_i\ind
N_{q}(\bDelta t_i,
u^{-1}_i\bGamma);\nonumber\\
T_i|U_i=u_i\ind TN(c,u_i^{-1};(c,\infty))&;&\,\,U_i\iid
H(\cdot;\bnu), \label{model21}
\end{eqnarray}
for $\ii,$ where $\bDelta=\textbf{D}^{1/2}\bdelta$,
$\bGamma=\textbf{D}-\bDelta\bDelta^{\top}$ with
$\bdelta=\blambda/(1+\blambda^{\top}\blambda)^{1/2}$ and
$\mathbf{D}^{1/2}$ being the square root of $\mathbf{D}$ containing
$q(q+1)/2$ distinct elements. $TN(\mu,\tau; (a,b))$ denotes the
univariate normal distribution $(N(\mu,\tau))$ truncated on the interval
$(a,b)$. 

Let $\mathbf{\widetilde{y}}_c=(\widetilde{\yp}^{\top},\mathbf{b}^{\top},\mathbf{u}^{\top},\mathbf{t}^{\top})^{\top}$,
with $\widetilde{\yp}=(\widetilde{\yp}^{\top}_1,\ldots,\widetilde{\yp}^{\top}_n)^{\top}$,
$\mathbf{b}=(\mathbf{b}^{\top}_1,\ldots,\mathbf{b}^{\top}_n)^{\top}$,
$\mathbf{u}=(u_1,\ldots,u_n)^{\top}$,
$\mathbf{t}=(t_1,\ldots,t_n)^{\top}$.
It follows from (\ref{model21}) that the complete-data log-likelihood
function is of the form
\begin{eqnarray*}\label{logcompleta}
\ell_c(\btheta\mid\widetilde{\mathbf{y}}_c)&=&\sumas\left[-\frac{n_i}{2}\log{\sigma_e^2}-\frac{u_i}{2\sigma_e^2}(\widetilde{\yp}_i-\mathbf{W}_i\bbeta-\mathbf{H}_i\mathbf{b}_i)^{\top}(\widetilde{\yp}_i-\mathbf{W}_i\bbeta-\mathbf{H}_i\mathbf{b}_i)\right.\nonumber\\
&&\left.-\frac{1}{2}\log{|\bGamma|}-
\frac{u_i}{2}(\mathbf{b}_i-\bDelta
t_i)^{\top}\bGamma^{-1}(\mathbf{b}_i- \bDelta
t_i)\right]+K(\bnu)+C,
\end{eqnarray*}
where $C$ is a constant that is independent of the parameter vector $\btheta$ and $K(\bnu)$ is a function that depends on $\btheta$ only through $\bnu$.
Now, from (\ref{model21}) and by using successively Lemma 2
in \cite{ArellanoLachos2005} (see also \cite{Lachos_Ghosh_Arellano_2009}), it is straightforward to show that
\begin{eqnarray}
\mathbf{b}_i|t_i,u_i,\widetilde{\yp}_i,{\btheta}&\sim&
N_q(\mathbf{s}_it_i+\mathbf{r}_i,
u^{-1}_i\mathbf{B}_{i}),\nonumber\\
T_i|u_i,\widetilde{\yp}_i,\btheta&\sim&TN(c+\mu_{i},u^{-1}_i M^2_{i};(c,\infty)),\label{eqdens1}\\
\widetilde{\Y}_i|\btheta&\sim& \SNI_{n_i}(\bW_i\bbeta+c\,\bH_i\bDelta,\widetilde{\bPsi}_i,\widetilde{\bar{\blambda}};H),\nonumber
\end{eqnarray}
where
${M}_{i}=[1+{\bDelta}^{\top}\bH_i^{\top}{\bOmega}_i^{-1}\bH_i{\bDelta}]^{-1/2}$,
${\mu}_{i}={M}_{i}^2{\bDelta}^{\top}\bH_i^{\top}{\bOmega}_i^{-1}(\widetilde{\yp}_i-\mathbf{W}_i{\bbeta}-c\,\bH_i{\bDelta}
)$,
${\mathbf{B}}_{i}=[{\bGamma}^{-1}+\sigma_e^{-2}\displaystyle\bH_i^{\top}\bH_i]^{-1}$,
${\mathbf{s}}_i=(\mathbf{I}_q-\sigma_e^{-2}{\mathbf{B}}_{i}\bH_i^{\top}\bH_i){\bDelta}$,
${\bOmega}_i=\sigma_e^2\mathbf{I}_{n_i}+\bH_i{\bGamma}\bH^{\top}_i$,
and
$\mathbf{r}_i=\displaystyle\sigma^{-2}_e\mathbf{B}_{i}\bH_i^\top(\widetilde{\yp}_i-\bW_i\bbeta)$, for $\ii$. 

For the current value $\btheta = \widehat{\btheta}^{(k)}$, after some algebra the E-step of the EM algorithm can be written as
\begin{eqnarray*}
Q\left(\btheta\mid\widehat\btheta^{(k)}\right)&=&\E\left\{\ell_c(\btheta\mid\widetilde{\mathbf{y}}_c);\widehat{\btheta}^{(k)},\widetilde{\y}\right\} \\
&=&\sumas Q_{1i}\left(\btheta_1\mid\widehat{\btheta}^{(k)}\right)+\sumas
Q_{2i}\left(\btheta_2\mid\widehat{\btheta}^{(k)}\right) +\sumas
Q_{3i}\left(\bnu\mid\widehat{\btheta}^{(k)}\right),
\end{eqnarray*}
where $\btheta_1=(\bbeta^\top,\sigma^2_e)^\top$,
$\btheta_2=(\balpha^{\top},\blambda^{\top})^{\top}$,
\begin{eqnarray*}
Q_{1i}\left(\btheta_1\mid\widehat{\btheta}^{(k)}\right)&=&-\frac{n_i}{2}\log\widehat{\sigma}_e^{2(k)}
-\frac{1}{2\widehat{\sigma}_e^{2(k)}}\widehat{u}^{(k)}_i\left(\widetilde{\y}_{i}-\bW_i\widehat\bbeta^{(k)}\right)^{\top}
\left(\widetilde{\y_{i}}-\bW_i\widehat\bbeta^{(k)}\right)\nonumber\\
&&+\frac{1}{\widehat{\sigma}_e^{2(k)}}\left(\widetilde{\y}_{i}-\bW_i\widehat\bbeta^{(k)}\right)^{\top}\bH_i\widehat{(u\mathbf{b})}^{(k)}_i-\frac{1}{2\widehat{\sigma}_e^{2(k)}}\textrm{tr}\left\{\bH_i\widehat{(u\mathbf{b}\mathbf{b}^\top)}^{(k)}_i\bH^{\top}_i\right\},\\
Q_{2i}\left(\btheta_2\mid\widehat{\btheta}^{(k)}\right)&=&-\frac{1}{2}\log{|\widehat\bGamma^{(k)}|}
-\frac{1}{2}\textrm{tr}\left\{\widehat\bGamma^{-1(k)}\left(\widehat{(u\mathbf{b}\mathbf{b}^\top)}^{(k)}_i-
\widehat{(ut\mathbf{b})}^{{(k)}}_i\widehat\bDelta^{\top(k)}\right.\right.\\
&&\left.\left.-\widehat\bDelta^{{(k)}}
\widehat{(ut\mathbf{b})}^{\top(k)}_i+\widehat{(ut_2)}^{(k)}_i\widehat\bDelta^{(k)}\widehat\bDelta^{\top{(k)}}\right)\right\}\label{Q2mat},
\end{eqnarray*}
with $\textrm{tr}\{\mathbf{A}\}$ and $|\mathbf{A}|$ indicating the trace and determinant of matrix
$\mathbf{A}$, respectively. The calculation of these functions require expressions
for
$\widehat{u}^{(k)}_{i}=\E\{U_i|\widehat{\btheta}^{(k)},\widetilde{\yp}_i\}$,
$\widehat{(u\mathbf{b})}^{(k)}_i=\E\{U_i\mathbf{b}_i|\widehat{\btheta}^{(k)},\widehat{\yp}_i\}$,
$\widehat{(u\mathbf{b}\mathbf{b}^\top)}^{(k)}_i
=\E\{U_i\mathbf{b}_i\mathbf{b}^{\top}_i|\widehat{\btheta}^{(k)},\widetilde{\yp}_i\}$,
$\widehat{(ut)}^{(k)}_i=\E\{U_iT_i|\widehat{\btheta}^{(k)},\widetilde{\yp}_i\}$,
$\widehat{(ut_2)}^{(k)}_i=\E\{UT^2_i|\widehat{\btheta}^{(k)},\widetilde{\yp}_i\}$
and $\widehat{(ut\mathbf{b})}^{(k)}_i=\E\{U_iT_i
\mathbf{b}_i|\widehat{\btheta}^{(k)},\widetilde{\yp}_i\}.$  From
(\ref{eqdens1}), these can be readily evaluated as
\begin{eqnarray}
\widehat{(ut)}^{(k)}_i&=&
\widehat{u}_i^{(k)}(\widehat{\mu}^{(k)}_{i}+\widehat{c})+\widehat{M}^{(k)}_{i}\widehat{\tau}^{(k)}_{1i},\nonumber\\
\widehat{(ut_2)}^{(k)}_{i}&=&
\widehat{u}^{(k)}_i[\widehat{\mu}^{(k)}_{i}+\widehat{c}]^2+[\widehat{M}_{i}^{(k)}]^2+
\widehat{M}^{(k)}_{i}(\widehat{\mu}^{(k)}_{i}+2\widehat{c})\widehat{\tau}^{(k)}_{1i},\label{EMSNI}\\
\widehat{(u\mathbf{b})}^{(k)}_i&=
&\widehat{u}^{(k)}_i\widehat{\mathbf{r}}^{(k)}_i+\widehat{\mathbf{s}}^{(k)}_i
~\widehat{(ut)}^{(k)}_i,\,\,\,
\widehat{(ut\mathbf{b})}^{(k)}_i=\widehat{\mathbf{r}}^{(k)}_i~\widehat{(ut)}^{(k)}_i+
\widehat{\mathbf{s}}^{(k)}_i~\widehat{(ut_2)}^{(k)}_i,\nonumber\\
\widehat{(u\mathbf{b}\mathbf{b}^\top)}^{(k)}_i&=&
\widehat{\mathbf{B}}^{(k)}_{i}+\widehat{u}^{(k)}_i\widehat{\mathbf{r}}^{(k)}_i\widehat{\mathbf{r}}_i^{\top{(k)}}
+\widehat{\mathbf{r}}^{(k)}_i\widehat{\mathbf{s}}^{\top{(k)}}_i\widehat{(ut)}^{(k)}_i
+\widehat{\mathbf{s}}^{(k)}_i\widehat{\mathbf{r}}^{{(k)}
\top}_i\widehat{(ut)}_i+\widehat{\mathbf{s}}^{(k)}_i\widehat{\mathbf{s}}_i^{\top{(k)}}\widehat{(ut_2)}^{(k)}_i,\nonumber
\end{eqnarray}
where $\widehat{c} = c(\widehat{\bnu})$, and the expressions for $\widehat{ u}^{(k)}_i$ and
$\widehat{\tau}_{1i}=\E\{U_i^{1/2}W_{\Phi}({U_i^{1/2}\widehat{\mu}^{(k)}_{i}}/{\widehat{M}^{(k)}_{i}})| \widehat{\btheta}^{(k)},\widetilde{\yp}_i\}$ can be found in Section 2 from \cite{Lachos_Ghosh_Arellano_2009}, which can be easily implemented for the skew--$t$ and skew--contaminated normal distributions, but involve numerical integration for the skew--slash case.

The CM-step then conditionally maximize
$Q\left(\btheta\mid\widehat{\btheta}^{(k)}\right)$ with respect to $\btheta$,
obtaining a new estimate  $\widehat{\btheta}^{(k+1)}$, as follows:

\noindent{\bf CM-step 1:} Fix $\widehat{\sigma}_e^{2(k)}$ and update
$\widehat{\bbeta}^{(k)}$ as
\begin{eqnarray}
\widehat{\bbeta}^{(k+1)}&=&\left(\sumas\widehat{ u}^{(k)}_i
\bW_i^{\top}\bW_i\right)^{-1}\sumas
\bW_i^{\top}\left(\widehat{u}^{(k)}_i\widetilde{\yp}_i-\bH_i\widehat{(u\mathbf{b})}^{(k)}_i\right).
\label{maxmatlabmis00}
\end{eqnarray}

\noindent{\bf CM--step 2:} Fix  $\widehat{\bbeta}^{(k+1)}$ and
update $\widehat{\sigma}_e^{2(k)}$ as\begin{eqnarray*}
\widehat{\sigma}_e^{2{(k+1)}}&=&\frac{1}{N}\sumas
\left[\widehat{u}^{(k)}_i\left(\widetilde{\yp}_i-\mathbf{W}_i\widehat{\bbeta}^{(k+1)}\right)^{\top}
\left(\widetilde{\yp}_i-\mathbf{W}_i\widehat{\bbeta}^{(k+1)}\right)
\right.\\
&&\left.
-2\left(\widetilde{\yp}_i-\mathbf{W}_i\widehat{\bbeta}^{(k+1)}\right)^{\top}\bH_i\widehat{(u\mathbf{b})}^{(k)}_i
+\textrm{tr}\left\{\bH_i\widehat{(u\mathbf{b}\mathbf{b}^\top)}^{(k)}_i\bH^{\top}_i\right\}\right],
\end{eqnarray*} \vspace{-.5cm} 

\noindent where $N = \sumas n_i$.

\noindent{\bf CM--step 3:} Update $\widehat{\bDelta}^{(k)}$ as $
\widehat{\bDelta}^{(k+1)}=\displaystyle\frac{\sumas
\widehat{(ut\mathbf{b})}^{(k)}_i}{\sumas \widehat{(ut_2)}^{(k)}_i}.
$

\noindent{\bf CM--step 4:} Fix  $\widehat{\bDelta}^{(k+1)}$ and
update $\widehat{\bGamma}^{(k)}$ as {
\begin{equation*}
    \widehat{\bGamma}^{(k+1)}=\frac{1}{n}\sumas
\left(\widehat{(u\mathbf{b}\mathbf{b}^\top)}^{(k)}_i-\widehat{(ut\mathbf{b})}^{(k)}_i\widehat{\bDelta}^{\top{(k+1)}}-\widehat{\bDelta}^{(k+1)}\widehat{(ut\mathbf{b})}^{\top{(k)}}_i
+\widehat{(ut_2)}^{(k)}_i\widehat{\bDelta}^{(k+1)}\widehat{\bDelta}^{\top{(k+1)}}\right).
\end{equation*}
}

\noindent{\bf CM--step 5 (for ECME):}
Update $\widehat{\bnu}^{(k)}$ by
optimizing the constrained approximate log-likelihood
function (obtained from Theorem \ref{prop2}): $$
\widehat{\bnu}^{(k+1)}=\underaccent{\bnu}{\textrm{argmax}} \{\ell(\widehat{\btheta}^{*(k+1)},\bnu,\widetilde{\be}_i\mid \yp)\},$$
where $\btheta^{*} = \btheta \setminus \bnu$.  

It is worth noting that the proposed algorithm is computationally simple to
implement and it guarantees definite positive scale matrix
estimate, once at the $k$th iteration
$\widehat{\mathbf{D}}^{(k)}=\widehat{\bGamma}^{(k)}+\widehat{\bDelta}^{(k)}\widehat{\bDelta}^{\top{(k)}}$ and
$\widehat{\blambda}^{(k)}=\displaystyle \widehat{\mathbf{D}}^{-1/2(k)}\widehat{\bDelta}^{(k)}/
(1-\widehat{\bDelta}^{\top{(k)}}\widehat{\mathbf{D}}^{-1(k)}\widehat{\bDelta}^{(k)})^{1/2}$.
{The iterations are repeated until a suitable convergence rule is
satisfied, e.g., if
$||\widehat{\btheta}^{(k+1)}/\widehat{\btheta}^{(k)}-1||$ is
sufficiently small, or until some distance involving  two successive
evaluations of the approximate log-likelihood (derived from Theorem \ref{prop1}), like
$|\ell(\widehat{\btheta}^{(k+1)},\widetilde{\be}^{(k+1)})/\ell(\widehat{\btheta}^{(k)},\widetilde{\be}^{(k)})-1|$,
is small enough.}
Furthermore, $\widetilde{\yp}_i$, $\mathbf{W}_i$ and $\mathbf{H}_i$, for $\ii$, are updated in each step of the EM-type algorithm, with $\widetilde{\be}_i$ being computed at each iteration using \eqref{MSQ}.

{In addition, standard errors for $\widehat{\btheta}^{*}$ are estimated using the inverse of the observed information matrix obtained from the score vector following the results in \cite{schumacher2020scale} (see also \cite{skewlmm-manual}) and considering the linear approximation from Theorem \ref{prop2}. }

\subsection{Starting values}

It is well known that maximum likelihood estimation in nonlinear
mixed models may face some computational hurdles, in the sense that
the method may not give maximum global solutions if the starting
values are far from the real parameter values. Thus, the choice of
starting values for an EM-type algorithm in the nonlinear context plays
a big role in parameter estimation. In this work we consider the
following procedure for obtaining initial values for a SN--NLME model:

\begin{itemize}
\item Compute $\widehat{\bbeta}^{(0)}$ and $\widehat{\sigma}^{2(0)}_e$ and $\widetilde{\mathbf{b}}^{(0)}$ using the classical N--NLME model through the library {\it nlme()} in R software, for instance.
\item The initial value for the skewness parameter $\blambda$ is  obtained in the
following way: Let $\hat{\rho}_{l}$ be the sample skewness
coefficient of the $l$th column of $\widehat{\mathbf{b}}^{(0)}$, obtained
under normality. Then, we let $\widehat{\lambda}^{(0)}_{l}=3 \times
\textrm{sign}(\hat{\rho}_{l})$, $l=1,\ldots,q$.
\end{itemize}
Moreover, for ST--NLME, SCN--NLME or the SSL--NLME model we adopt the following strategy:
\begin{itemize}
\item Obtain initial values via method described above for the SN--NLME model;
\item Perform MLEs of the parameters of the SN--NLME via EM algorithm;
\item Use the EM estimates from the SN--NLME model as initial values for the corresponding ST--NLME, SSL--NLME and SCN--NLME models.
\item The initial values for ${\bnu}$ are considered as follows: $10$ for the ST distribution, $5$ for the SSL distribution, and $(0.05,0.8)$ for the SCN distribution.
\end{itemize}

Even though these procedures look reasonable for computing the
starting values, the tradition in practice is to try several
initial values for the EM algorithm, in order to get the highest
likelihood value. It is important to note that the highest
maximized likelihood is an essential information for some model
selection criteria, such as Akaike information criterion $(AIC,
-2\ell(\widehat{\btheta},\widetilde{\be}^{L})+2\aleph)$, where
$\aleph$ is the number of free parameters, which can be used in
practice to select between various SMSN--NLME models. In this work we
use the result from Theorem 1 to calculate AIC values.

\subsection{Futures observations\label{notes}}

Suppose now that we are interested in the prediction of
$\mathbf{Y}_i^+$, a $\upsilon\times 1$ vector of future measurements
of $\mathbf{Y}_i$, given the observed measurement
$\mathbf{Y}=(\mathbf{Y}^{\top}_{(i)},\mathbf{Y}^{\top}_i)^{\top}$,
where
$\mathbf{Y}_{(i)}=(\mathbf{Y}^{\top}_1,\ldots,\mathbf{Y}^{\top}_{i-1},\\\mathbf{Y}^{\top}_{i+1},\ldots,\mathbf{Y}^{\top}_{n})^\top$.
The minimum MSE predictor of $\Y^+_i$, which is the conditional
expectation $\Y_i^{+}$ given $\Y_i$ and $\btheta$, is given in the
following Theorem. The notation used is the one from Theorem \ref{prop1}.

\begin{theorem}\label{prop4} Let $\widetilde{\ba}_i$ be an expansion point in
a neighborhood of $\ba_i$, $\mathbf{Y}_i^+$ be an $\upsilon\times 1$
vector of future measurement of $\mathbf{Y}_i$  (or possibly
missing) and $\X^+_i$ be an $\upsilon\times r$ matrix of known
prediction regression variables. Then, under the SMSN--NLME model as
(\ref{modeleq11})--(\ref{modSnmis1}), the predictor (or minimum MSE
predictor) of $\mathbf{Y}_i^+$ can be approximated as
\begin{eqnarray}
\widehat{\mathbf{Y}}^+_i(\btheta)&=&\E\{\mathbf{Y}^+_i|\mathbf{Y}_i,\btheta\}\approx
\widetilde{\bmu}_{2.1}+\frac{\widetilde{\bPsi}_{i22.1}\bupsilon^{(2)}_i}{\sqrt{1+\bupsilon^{(2)\top}_i\widetilde{\bPsi}_{i22.1}\bupsilon^{(2)}_i}}\tau_{-1i},\label{condi}
\end{eqnarray}
where
$$\widetilde{\bmu}_{2.1}={\eta}(\A_i{\bbeta}+\widetilde{\ba}_i,{\X}^+_i)-\widetilde{\bH}^+_i({\widetilde{\ba}}_i-c\bDelta)+\widetilde{\bPsi}^*_{i21}\widetilde{\bPsi}^{*-1}_{i11}\left(\Y_i-{\eta}(\A_i{\bbeta}+\widetilde{\ba}_i,{\X}_i)+\widetilde{\bH}_i({\widetilde{\ba}}_i-c\bDelta)\right),$$
$\widetilde{\bPsi}_{i22.1}=\widetilde{\bPsi}^*_{i22}-\widetilde{\bPsi}^*_{i21}\widetilde{\bPsi}^{*-1}_{i11}\widetilde{\bPsi}^*_{i12},$
$\widetilde{\bPsi}^{*}_{i11}=\widetilde{\bPsi}_i=\widetilde{\bH}_i
\bD \widetilde{\bH}_i^{\top}+\sigma^2_e \mathbf{I}_{n_i}$,
$\widetilde{\bPsi}^*_{i12}=\widetilde{\bPsi}^{*\top}_{i21}=\widetilde{\bH}_i
\bD \widetilde{\bH}_i^{+\top}$,
$\widetilde{\bPsi}^{*}_{i22}=\widetilde{\bH}^+_i \bD
\widetilde{\bH}_i^{+\top}+\sigma^2_e \mathbf{I}_{\upsilon}$,\,
$\widetilde{\bPsi}^{*-1/2}_i\widetilde{\bar{\blambda}}_{i}^*=(\bupsilon^{(1)\top}_i,\bupsilon^{(2)\top}_i)^{\top}$, and
$$\tau_{-1i}=E\left\{U^{-1/2}_i\displaystyle{W_{\Phi}\left(U^{1/2}_i\widetilde{\bupsilon}^{\top}_i(\Y_i-{\eta}(\A_i{\bbeta}+\widetilde{\ba}_i,{\X}_i)+\widetilde{\bH}_i({\widetilde{\ba}}_i-c\bDelta))\right)}|\Y_i\right\},$$
with 
$\widetilde{\bPsi}^{*}_i = \left(\begin{array}{cc}
    \widetilde{\bPsi}^{*}_{i11} & \widetilde{\bPsi}^{*}_{i12} \\
     \widetilde{\bPsi}^{*}_{i21}&\widetilde{\bPsi}^{*}_{i22} 
\end{array}\right) = \sigma_e^2\mathbf{I}_{n_i+\upsilon}+\widetilde{\mathbf{H}}^*_i\mathbf{D}\widetilde{\mathbf{H}}_i^{*\top}$,\,
$\widetilde{\bar{\blambda}}^{*}_{i}=
\displaystyle\frac{\widetilde{\bPsi}_i^{*-1/2}\widetilde{\mathbf{H}}^*_i\mathbf{D}\bzeta}
{\sqrt{1+\bzeta^{\top}\widetilde{\bLambda}^*_i\bzeta}}$,\,
$\widetilde{\bLambda}^*_i=(\mathbf{D}^{-1}+\sigma_e^{-2}\widetilde{\mathbf{H}}^{*\top}_i
\widetilde{\mathbf{H}}^*_i)^{-1}$, 
$\widetilde{\bH}^*_i=(\widetilde{\bH}^{\top}_i,\widetilde{\bH}^{+\top}_i)^{\top}$,\,
$\widetilde{\bH}^+_i=\displaystyle\frac{\partial
\eta(\A_i{{\bbeta}}+{\ba}_i,\X^+_i)}{\partial{\ba}^{\top}_i}|_{\ba_i=\widetilde{\ba}_i},$
and\\\noindent
$\widetilde{\bupsilon}_i=\displaystyle\frac{{\bupsilon}^{(1)}_i+\widetilde{\bPsi}^{*-1}_{i11}\widetilde{\bPsi}^*_{i12}{\bupsilon}^{(2)}_i}{\sqrt{1+{\bupsilon}^{(2)\top}_i\widetilde{\bPsi}^{*}_{i22.1}{\bupsilon}^{(2)}_i}}.$
\end{theorem}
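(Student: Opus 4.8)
The plan is to reduce Theorem~\ref{prop4} to a conditioning computation inside the $\SNI$ family, following the proof of Theorem~\ref{prop1} but carried out for the \emph{augmented} response $(\Y_i^{\top},\Y_i^{+\top})^{\top}$. Fix $\bbeta$ and omit the sub-index $i$. First I would take the first-order Taylor expansion of $\eta$ around $\widetilde{\ba}$ at the observed covariates $\X$ \emph{and} at the prediction covariates $\X^{+}$ simultaneously, which gives $\Y\approx\eta(\A\bbeta+\widetilde{\ba},\X)+\widetilde{\bH}(\be-\widetilde{\ba})+\bepsilon$ and $\Y^{+}\approx\eta(\A\bbeta+\widetilde{\ba},\X^{+})+\widetilde{\bH}^{+}(\be-\widetilde{\ba})+\bepsilon^{+}$, where $\widetilde{\bH},\widetilde{\bH}^{+}$ are the Jacobians in the statement and $(\bepsilon^{\top},\bepsilon^{+\top})^{\top}\sim\NI_{n+\upsilon}(\mathbf{0},\sigma_e^{2}\mathbf{I}_{n+\upsilon};H)$ collects the observed and (unobserved) future errors. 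Setting $\widetilde{\bH}^{*}=(\widetilde{\bH}^{\top},\widetilde{\bH}^{+\top})^{\top}$ and stacking, the same steps as in Theorem~\ref{prop1} --- condition on $(\be,U)$, integrate out $\be$ using the affine/convolution properties of the skew--normal, then integrate out $U$ by Lemmas~1 and 2 of \cite{ArellanoLachos2005} --- yield
\[
(\Y^{\top},\Y^{+\top})^{\top}\app\SNI_{n+\upsilon}\!\left(\bmu^{*},\widetilde{\bPsi}^{*},\widetilde{\bar{\blambda}}^{*};H\right),
\]
with $\bmu^{*}=(\bmu^{*\top}_{(1)},\bmu^{*\top}_{(2)})^{\top}$, $\bmu^{*}_{(1)}=\eta(\A\bbeta+\widetilde{\ba},\X)-\widetilde{\bH}(\widetilde{\ba}-c\bDelta)$, $\bmu^{*}_{(2)}=\eta(\A\bbeta+\widetilde{\ba},\X^{+})-\widetilde{\bH}^{+}(\widetilde{\ba}-c\bDelta)$, $\widetilde{\bPsi}^{*}=\sigma_e^{2}\mathbf{I}_{n+\upsilon}+\widetilde{\bH}^{*}\bD\widetilde{\bH}^{*\top}$ partitioned into the blocks $\widetilde{\bPsi}^{*}_{11},\widetilde{\bPsi}^{*}_{12},\widetilde{\bPsi}^{*}_{22}$ of the statement, and $\widetilde{\bar{\blambda}}^{*}$ the skewness vector given there; note that $\bmu^{*}_{(1)}$ and $\widetilde{\bPsi}^{*}_{11}=\widetilde{\bPsi}$ reproduce the marginal parameters of $\Y$ supplied by Theorem~\ref{prop1}.

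Second I would derive the conditional law of $\Y^{+}$ given $\Y$. Conditioning on $U=u$ makes the augmented vector skew--normal, with density proportional to $\phi_{n+\upsilon}(\cdot;\bmu^{*},u^{-1}\widetilde{\bPsi}^{*})\,\Phi\!\big(u^{1/2}\bupsilon^{\top}(\cdot-\bmu^{*})\big)$, where $\bupsilon=\widetilde{\bPsi}^{*-1/2}\widetilde{\bar{\blambda}}^{*}=(\bupsilon^{(1)\top},\bupsilon^{(2)\top})^{\top}$. Factoring the $\phi_{n+\upsilon}$ density as $\phi_{n}(\y_{1};\cdot)\,\phi_{\upsilon}(\y_{2};\widetilde{\bmu}_{2.1},u^{-1}\widetilde{\bPsi}_{22.1})$ by the Gaussian conditioning formula and re-centering the argument of $\Phi$ at $\widetilde{\bmu}_{2.1}$, the $\phi_{n}(\y_{1};\cdot)$ factor cancels in the conditional density, and $\Y^{+}\mid\Y=\y,U=u$ is seen to have an extended skew--normal distribution \citep{Azzalini99} with location $\widetilde{\bmu}_{2.1}=\bmu^{*}_{(2)}+\widetilde{\bPsi}^{*}_{21}\widetilde{\bPsi}^{*-1}_{11}(\y-\bmu^{*}_{(1)})$, scale $u^{-1}\widetilde{\bPsi}_{22.1}$ with $\widetilde{\bPsi}_{22.1}=\widetilde{\bPsi}^{*}_{22}-\widetilde{\bPsi}^{*}_{21}\widetilde{\bPsi}^{*-1}_{11}\widetilde{\bPsi}^{*}_{12}$, skewness slope $\bupsilon^{(2)}$, and extension term $(\bupsilon^{(1)}+\widetilde{\bPsi}^{*-1}_{11}\widetilde{\bPsi}^{*}_{12}\bupsilon^{(2)})^{\top}(\y-\bmu^{*}_{(1)})$; dividing the latter by $\sqrt{1+\bupsilon^{(2)\top}\widetilde{\bPsi}_{22.1}\bupsilon^{(2)}}$ makes it exactly $\widetilde{\bupsilon}^{\top}(\y-\bmu^{*}_{(1)})$ for the $\widetilde{\bupsilon}$ of the statement.

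Third, the predictor follows by the law of iterated expectations. Applying Lemma~2 of \cite{Lachos_Ghosh_Arellano_2009} to this extended skew--normal law gives
\[
\E\{\Y^{+}\mid\Y=\y,U=u\}=\widetilde{\bmu}_{2.1}+\frac{\widetilde{\bPsi}_{22.1}\bupsilon^{(2)}}{\sqrt{1+\bupsilon^{(2)\top}\widetilde{\bPsi}_{22.1}\bupsilon^{(2)}}}\,u^{-1/2}W_{\Phi}\!\big(u^{1/2}\widetilde{\bupsilon}^{\top}(\y-\bmu^{*}_{(1)})\big),
\]
and since $\widetilde{\bmu}_{2.1}$, $\widetilde{\bPsi}_{22.1}$ and $\bupsilon^{(2)}$ do not involve $u$, taking the expectation over $U\mid\Y=\y$ collapses the last factor into $\tau_{-1i}=\E\{U^{-1/2}W_{\Phi}(U^{1/2}\widetilde{\bupsilon}^{\top}(\Y-\bmu^{*}_{(1)}))\mid\Y\}$; restoring the sub-index $i$ and recalling $\bmu^{*}_{(1)}=\eta(\A_i\bbeta+\widetilde{\ba}_i,\X_i)-\widetilde{\bH}_i(\widetilde{\ba}_i-c\bDelta)$ yields \eqref{condi}.

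The step I expect to be the main obstacle is the bookkeeping in the second paragraph: one must verify that, after conditioning and re-standardizing by $\widetilde{\bPsi}_{22.1}$, (i) the conditional skewness slope is exactly the $\upsilon$-subvector $\bupsilon^{(2)}$ and not a combination of $\bupsilon^{(1)}$ and $\bupsilon^{(2)}$; (ii) the $\y$-dependent piece of the $\Phi$-argument carries a \emph{plus} sign, so that it normalizes to $\widetilde{\bupsilon}=(\bupsilon^{(1)}+\widetilde{\bPsi}^{*-1}_{11}\widetilde{\bPsi}^{*}_{12}\bupsilon^{(2)})/\sqrt{1+\bupsilon^{(2)\top}\widetilde{\bPsi}_{22.1}\bupsilon^{(2)}}$; and (iii) the vector multiplying $W_{\Phi}$ in the conditional mean is $\widetilde{\bPsi}_{22.1}\bupsilon^{(2)}/\sqrt{1+\bupsilon^{(2)\top}\widetilde{\bPsi}_{22.1}\bupsilon^{(2)}}$. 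All three follow from careful use of the Schur-complement and block-inverse identities for $\widetilde{\bPsi}^{*}$ together with the symmetry $\widetilde{\bPsi}^{*}_{21}=\widetilde{\bPsi}^{*\top}_{12}$; everything else is a direct invocation of Theorem~\ref{prop1} and of the skew--normal moment lemmas in \cite{ArellanoLachos2005} and \cite{Lachos_Ghosh_Arellano_2009}.
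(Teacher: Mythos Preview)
Your proposal is correct and follows essentially the same route as the paper's proof: both obtain the approximate $\SNI_{n_i+\upsilon}$ law for the augmented vector $(\Y_i^{\top},\Y_i^{+\top})^{\top}$ by applying the Theorem~\ref{prop1} linearization to the stacked system, then condition on $U_i=u_i$ to reduce to a skew--normal and finish via the law of iterated expectations. The paper's proof is terser (it simply states the augmented $\SNI$ distribution and that ``the rest follows by noting that $\Y_i^*\mid u_i\sim SN_{n_i+\upsilon}(\cdot)$ and applying the law of iterative expectations''), whereas you spell out the intermediate step that $\Y^{+}\mid\Y,U=u$ is extended skew--normal and invoke Lemma~2 of \cite{Lachos_Ghosh_Arellano_2009} for its mean; this added detail is exactly what the paper leaves implicit.
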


\begin{proof}
Under the notation and result given in Theorem \ref{prop1},  we have that
$$\Y_i^*=\left[\begin{array}{c}
    \mathbf{Y}_i \\
    \mathbf{Y}_i^+
  \end{array}\right]\app
  \SNI_{n_i+\upsilon}\left({\eta}(\widetilde{\ba}_i,\X^*_i)-\widetilde{\bH}^*_i({\widetilde{\ba}}_i-c\bDelta),\widetilde{\bPsi}^*_i,\widetilde{\bar{\blambda}}^*;H\right),
$$
where
${\eta}(\widetilde{\ba}_i,\X^*_i)=\left({\eta}^{\top}(\A_i{\bbeta}+\widetilde{\ba}_i,{\X}_i),{\eta}^{\top}(\A_i{\bbeta}+\widetilde{\ba}_i,{\X}^+_i)\right)^{\top}$,
${\X}^*_i=({\X}^{\top}_i,{\X}^{+\top}_i)^{\top}$. The  rest of the proof follows
by noting that $\Y_i^*|u_i\sim
SN_{n_i+\upsilon}({\eta}\left(\widetilde{\ba}_i,\X^*_i)-\widetilde{\bH}^*_i({\widetilde{\ba}}_i-c\bDelta),u_i^{-1}\widetilde{\bPsi}^*_i,\widetilde{\bar{\blambda}}^*\right)$
and applying the law of iterative expectations.
\end{proof}

It can be shown that marginally $\Y_i\app
SMSN\left({\eta}(\A_i{\bbeta}+\widetilde{\ba}_i,{\X}_i)-\widetilde{\bH}_i({\widetilde{\ba}}_i-c\bDelta),\widetilde{\bPsi}_i,\widetilde{\bPsi}^{-1/2}_i\widetilde{\bupsilon}_i;H\right)$, and thence
the conditional expectations $\tau_{-1i}$ can be easily derived from
the result of Section 2 from \cite{Lachos_Ghosh_Arellano_2009}.
In practice, the prediction of $\Y^+_i$ can be obtained by
substituting the ML estimate $\widehat{\btheta}$ and
$\widetilde{\be}^{L}_i$ into (\ref{condi}), that is 
$\widehat{\Y}^+_i=\widehat{\Y}^+_i(\widehat{\btheta},
\widetilde{\be}_i^{L})$, where $\widetilde{\be}_i^{L}$ is the random
effect estimate in the last iteration of the EM algorithm.\\

\section{Simulation studies}\label{sec4}

\begin{figure}[htb]
\begin{center}
\includegraphics[scale=.7]{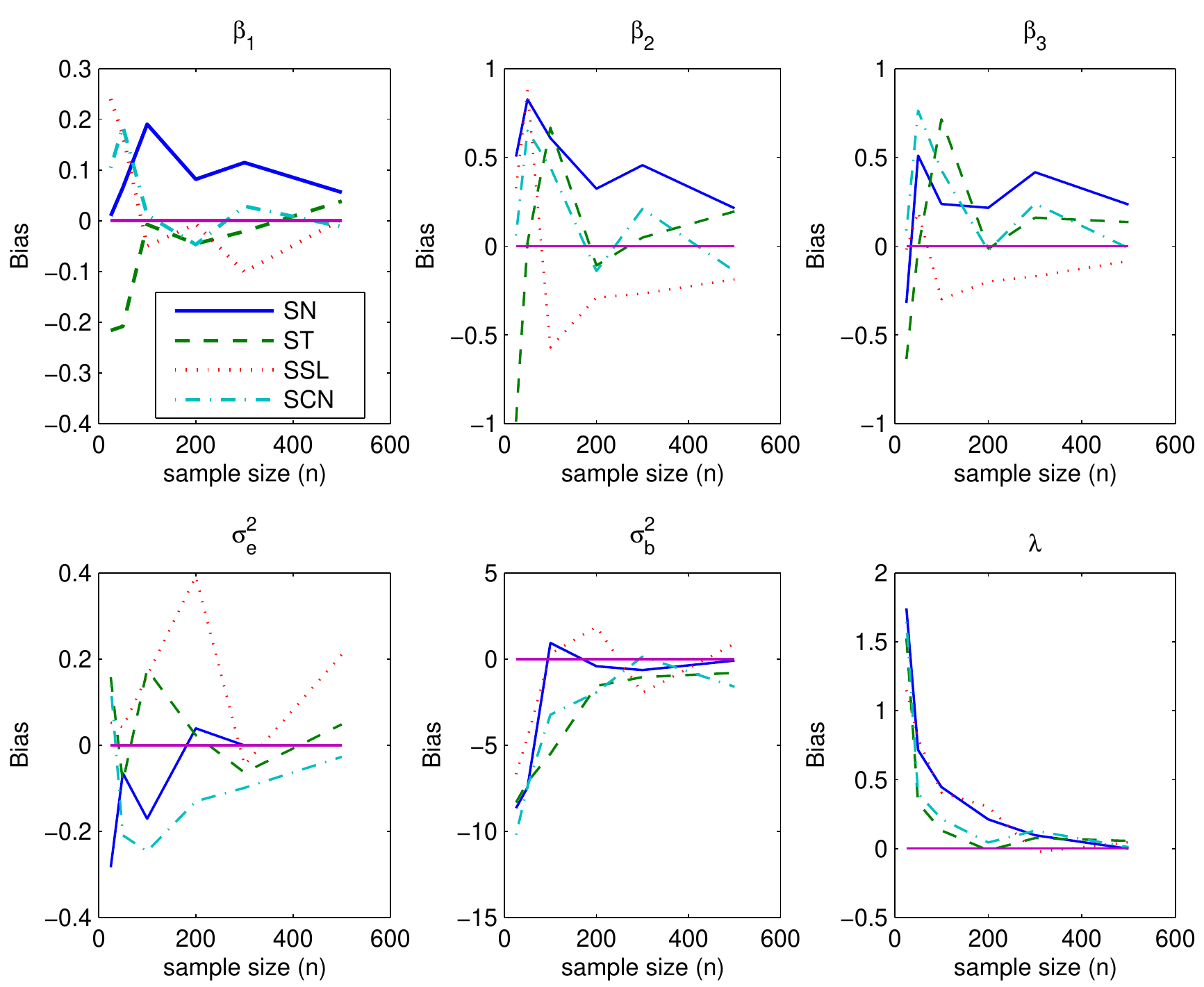}\\
\caption{Bias of the approximate ML estimates of $\bbeta,\sigma^2_e,\sigma_b^2$ and $\lambda$, based on $500$ Monte Carlo data sets for each SMSN distribution.\label{out1}}
\end{center}
\end{figure}

{ In order to examine the performance of the proposed method, in this section
we present the results of some simulation studies. 
{For simplicity, in the simulation studies we fix $\bnu$ at its true value.}
The first simulation study shows
that the proposed approximate ML estimates based on the EM algorithm
provide good asymptotic properties. The second study investigates
the consequences in population inferences of an inappropriate normality
assumption, and additionally it evaluates the efficacy of the measurement
used for model selection (AIC) when the result given in Theorem \ref{prop1} is used.}
\begin{figure}[htb]
\begin{center}
\includegraphics[scale=.7]{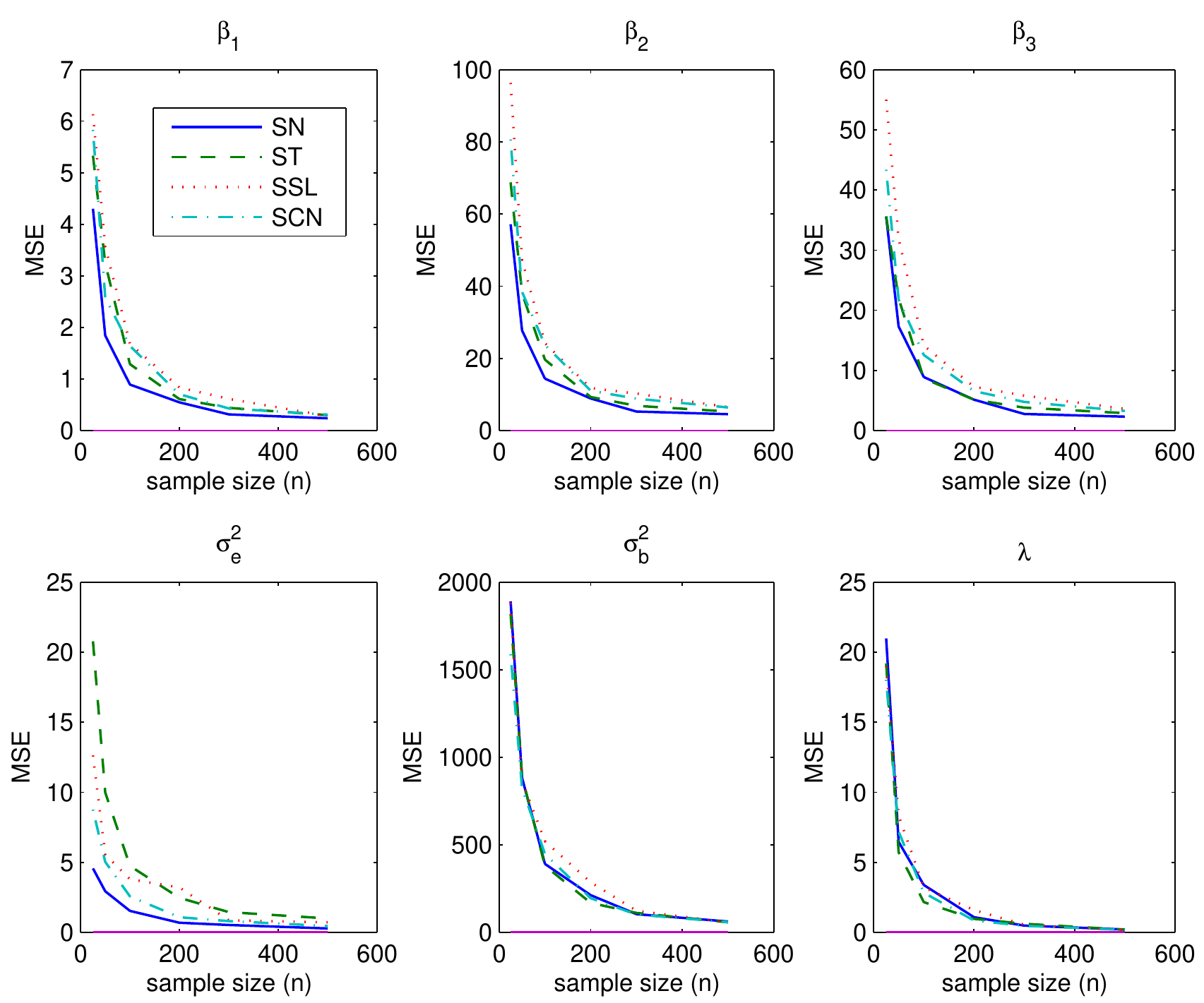}\\
\caption{MSE of the approximate ML estimates of $\bbeta,\sigma^2_e,\sigma_b^2$ and $\lambda$, based on $500$ Monte Carlo data sets for each SMSN distribution.\label{out2}}
\end{center}
\end{figure}

\subsection{First study}\label{sim1}

To evaluate the asymptotic behaviour of the proposed estimation method, we performed a simulation
study considering the following nonlinear growth-curve logistic model \citep{PinheiroBates95}:
\begin{equation}\label{simModel}
y_{ij}=\frac{\beta_1+b_{i}}{1+\exp{\{-(t_{j}-\beta_2)/\beta_3\}}}+\epsilon_{ij},\,\,
i=1,\ldots,n,\,\,\, j=1,\ldots,10,
\end{equation}
where $t_{j}=100, 267, 433, 600, 767, 933, 1100,1267,1433,1600$. The random effects $b_{i}$ and the error $\bepsilon_i=(\epsilon_{i1}\ldots,\epsilon_{i10})^{\top}$ are
non-correlated with
\begin{equation}\label{modSnmis1aux} \left(\begin{array}{c}
        b_i \\
        \bepsilon_i
      \end{array}
\right)\buildrel ind.\over\sim \SNI_{11}\left(\left(\begin{array}{c}
                               c\bDelta \\
                               \mathbf{0}
                             \end{array}
\right),\left(\begin{array}{cc}
           \sigma^2_b & \mathbf{0} \\
           \mathbf{0} & \sigma^2_e \mathbf{I}_{10}
         \end{array}
\right),\left(\begin{array}{c}
                               \lambda \\
                               \mathbf{0}
                             \end{array}
\right); H \right),\,\,i=1,\ldots,n.
\end{equation}
We set
$\bbeta=(\beta_1,\beta_2,\beta_3)^{\top}=(200,700,350)^{\top}$,
$\sigma_e^2=25$, $\sigma_b^2=100$, $\lambda=4$, implying in $\bDelta = 40/\sqrt{17} = 9.7014$, and $c=-\sqrt{{2}/{\pi}}\, k_1$, where $k_1$ depends on the specific SMSN distribution considered. Additionally, the samples sizes are fixed at $n=25,\,50,\,100,\,200,\, 300$ and $500$.
For each sample size, 500 Monte Carlo samples from the SMSN--NLME model in (\ref{modSnmis1aux}) are generated under four scenarios: under the skew--normal model (SN--NLME), under the skew--t with
$\nu=4$ (ST--NLME), under the skew--slash with $\nu=2$ (SSL--NLME), and under the
skew--contaminated normal model with $\bnu=(0.3,\,0.3)$ (SCN--NLME).
The values of $\bnu$ were chosen in order to yield a highly skewed
and heavy-tailed distribution for the random effects.

For each Monte Carlo sample, model (\ref{modSnmis1aux}) was
fit under the same distributional assumption that the data set was generated. 
Then we computed the empirical bias and empirical mean square error (MSE)
over all samples. For $\beta_1$, for instance, they are defined as
$$\text{Bias}(\bbeta_1)=\frac{1}{500}\sum^{500}_{k=1}\widehat{\beta}^{(k)}_1-\beta_1\,\,\textrm{and}\,\, \text{MSE}(\bbeta_1)=\frac{1}{500}\sum^{500}_{k=1}(\widehat{\beta}^{(k)}_1-\beta_1)^2, $$
respectively, where $\widehat{\beta}^{(k)}_1$ is the approximate ML estimate of $\beta_1$ obtained through ECM algorithm using the $k$th Monte Carlo sample. Definitions for the other parameters are obtained by analogy. 

Figures \ref{out1} and \ref{out2} show a graphical representation of the obtained results for bias and MSE, respectively. Regarding to the bias, we can see in general patterns of convergence to zero as $n$ increases. The worst case scenario seems to happen while estimating
the scale and skewness parameters of the random effect, which could be caused by the well known inferential problems related to the skewness parameter in skew--normal models, or maybe it would require a sample size greater than $500$ to obtain a reasonably pattern of convergence.
On the other hand, satisfactory values of MSE seem to occur when $n$ is greater than
400. As a general rule, we can say that both the bias and the MSE tend to approach to zero when the sample size is increasing, indicating that the approximate ML estimates based on the proposed EM-type algorithm provide good asymptotic properties.

\begin{figure}[htb]
\begin{center}
\centering
\includegraphics[width=0.6\textwidth]{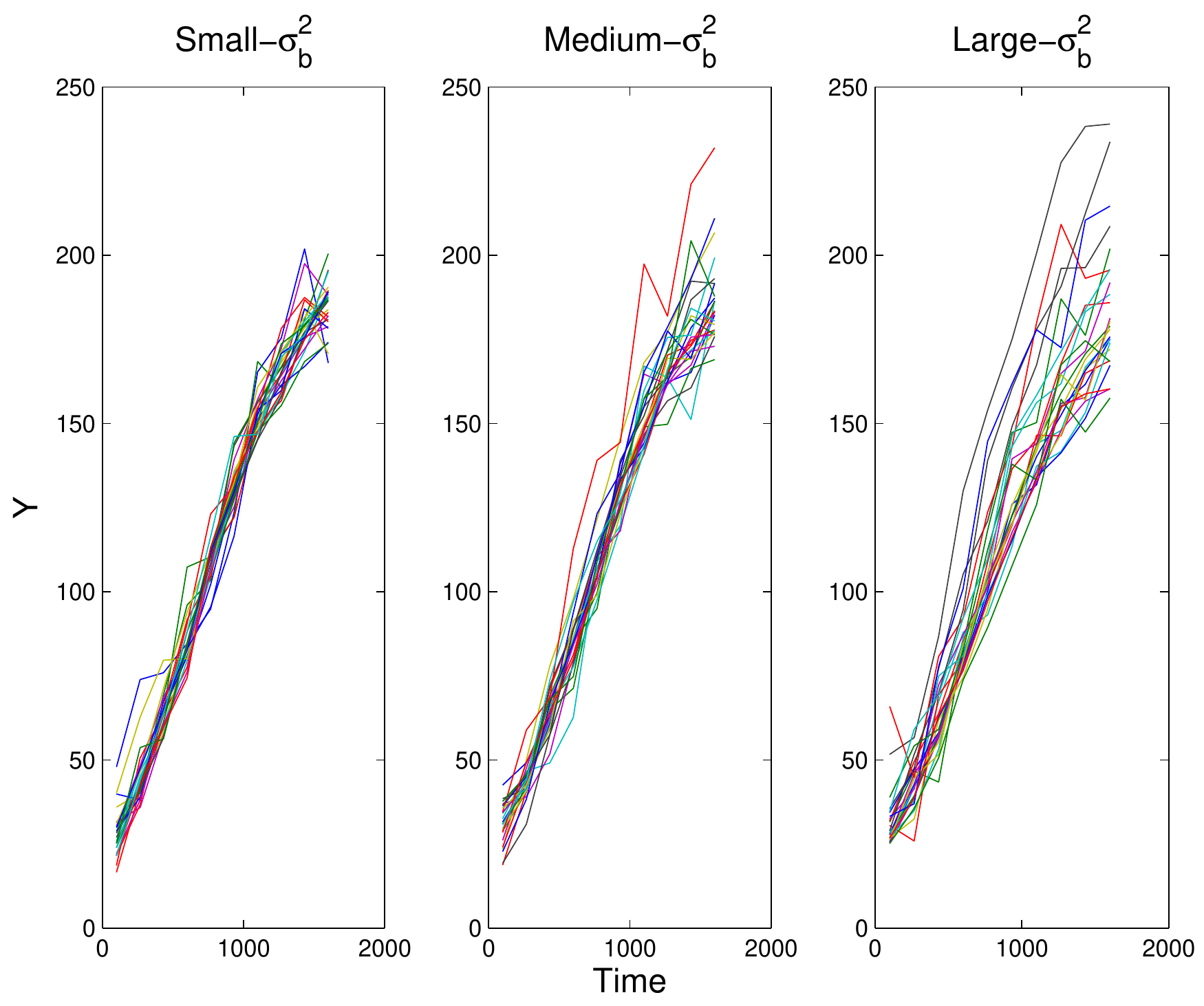}
\caption{Simulated logistic curves under skew-t distribution for
different values of the scale parameter of the random effects.
\label{out1-2}}
\end{center}
\end{figure}

\subsection{Second study}

The goal of this simulation study is to asses the robustness or bias
incurred when one assumes a normal distribution for random effects
and the actual distribution is ST. The design of this simulation
study is similar to the one in Section \ref{sim1}, but now $500$ Monte Carlo samples
were generate considering only a ST model (\ref{modSnmis1aux}) with $\nu=4$
and $n=25$. Additional simulations were created by using the same
values of $(\bbeta,\sigma_e^2,\lambda)$ in (\ref{modSnmis1aux}) and
multiplying the scale parameter $\sigma^2_b$ by 0.25
and 6.25, obtaining $\sigma^2_b=25$ (small) and $\sigma_b^2=625$ (large).
This aims to verify if the proposed approximate methods are reliable in
different settings of the scale parameter $\sigma^2_b$. 
Therefore, three different scenarios are considered and for each scenario we fit model (\ref{simModel}) assuming the distributions normal and skew--$t$ with 4 degree of freedom, to each Monte Carlo data set.

For evaluating the capability of the proposed selection criteria in selecting the appropriate
distribution, the model preferred by the AIC criterion was also recorded for each sample.
Figure \ref{out1-2} shows example profiles for each of the three sizes
of scale components considered. The adjectives ``small'', ``medium'' and ``large'' are referring to the values assumed for $\sigma^2_b$. 
Note that for this particular model the variability increases with the mean as well as with the scale parameter.

\begin{table}[h] \small
\renewcommand{\baselinestretch}{1.5}
 \small
\begin{center}
\caption{\label{Norma_simula} Monte Carlo results for fixed effects
parameter estimates  based on $500$ Monte Carlo data generated from
a ST model (\ref{modSnmis1aux}) considering different values of the scale parameter $\sigma^2_b$ and $n=25$. True values of parameters are in parentheses and {\it pref. AIC} indicates the number of samples that each model was preferred by the AIC.}
\begin{tabular}{cccccccc}
 \hline
 &  &  \multicolumn{3}{c}{Normal model}&\multicolumn{3}{c}{ST model}  \\
 \cline{3-8}
Scenario      &  Measure   & $\beta_1$       & $\beta_2$     &  $\beta_3$ & $\beta1$ & $\beta_2$&$\beta_3$\\
     &     & (200)       & (700)     &  (350) & (200)& (700)&(350)\\
 \hline
  &Mean&      199.8313 & 698.6879 & 348.9097 & 199.8135 & 699.2931&  349.2883   \\
  &      Bias   &    -0.1687 &  -1.3121  & -1.0903&-0.1865   &-0.7069   &-0.7117   \\
Small--$\sigma_b^2$       &      MSE   & 6.0073 & 124.3313  & 72.4067& 3.1148 &  67.0638 &  39.8902 \\
       &     $95\%$ \textrm{Cov}   &    95.2   &95.0 & 95.2 &  96.4&  94.6  & 95.2\\
          &       {\it
pref. AIC}   &  \multicolumn{3}{c}{14}&\multicolumn{3}{c}{486} \\
\hline
 &Mean&         199.6450 & 698.1040&  348.6829  & 199.7767 & 698.8985 & 349.0874  \\
  &      Bias   &  -0.3550   &-1.8960  & -1.3171&-0.2233  & -1.1015  & -0.9126   \\
Medium--$\sigma_b^2$        &      MSE   &8.6303 & 125.4999 &  71.8013 &   5.2957  & 73.9529   &41.6302 \\
       &     $95\%$ \textrm{Cov}   &     95.4   &94.4 &  94.6&  95.0&   95.6  & 94.2\\
          &       {\it
pref. AIC}   &  \multicolumn{3}{c}{11}&\multicolumn{3}{c}{489} \\
\hline
  &Mean&          198.6449 & 696.4654 & 347.7706  & 199.3812 & 698.0883 & 348.7006   \\
  &      Bias   &   -1.3551  & -3.5346 &  -2.2294&-0.6188  & -1.9117 &  -1.2994  \\
 Large--$\sigma_b^2$&      MSE   &12.2231 & 121.7152  & 75.0350& 6.0414   &64.6235 &  36.6364
 \\
       &     $95\%$ \textrm{Cov}   &   90.4   &94.4 &  94.4 & 94.2&  94.8  & 94.4\\
          &       {\it
pref. AIC}   &  \multicolumn{3}{c}{26}&\multicolumn{3}{c}{474} \\
\hline
\end{tabular}
\end{center}
\end{table}

Table \ref{Norma_simula} presents summary measures for the fixed effects
parameter estimates assuming normal and ST distributions for different values of the scale parameter $\sigma^2_b$, where the true parameters are indicated in parenthesis, Mean denotes the arithmetic average of the 500 estimates, Bias is the empirical mean bias, MSE is the empirical
mean squared error, and finally, $95\%$ Cov denotes the observed
coverage of the $95\%$ confidence interval computed using the
model-based standard error and the critical value=1.96.

The results in Table \ref{Norma_simula} suggest that irrespective of the fitted
NLME model, the bias and  MSE of the fixed effects increase as the
scale component becomes larger. Moreover, we notice from this table
that the bias and MSE from the ST fit are generally smaller than the ones from
the normal fit, indicating that models with skewness and
longer-than-normal tails may produce more accurate approximate MLEs.
In Figure \ref{out2-2} we present the empirical MSE for different
values of $n=25, 50, 100, 200$, and $500$ and for medium-$\sigma^2_b$, 
illustrating clearly the slower convergence to zero when the normal distribution is inappropriately used. 

Therefore, the results indicate that the efficiency in estimating fixed effects in NLME models can be severely degraded when normality is assumed, in comparison to considering a more flexible
approach via the ST distribution, corroborating with results from other authors, such as \cite{HartforDav200} and \cite{Litiere2007}. 
Since the main focus of such analysis is usually the evaluation of the fixed effects, this suggests that adopting normality assumptions routinely may lead to inefficient inferences on fixed effects when the true distribution is not normal. The inferences for the variance
components are not comparable for the two fitted models since they
are in different scales.

\begin{figure}[htb]
\begin{center}
\includegraphics[scale=.5]{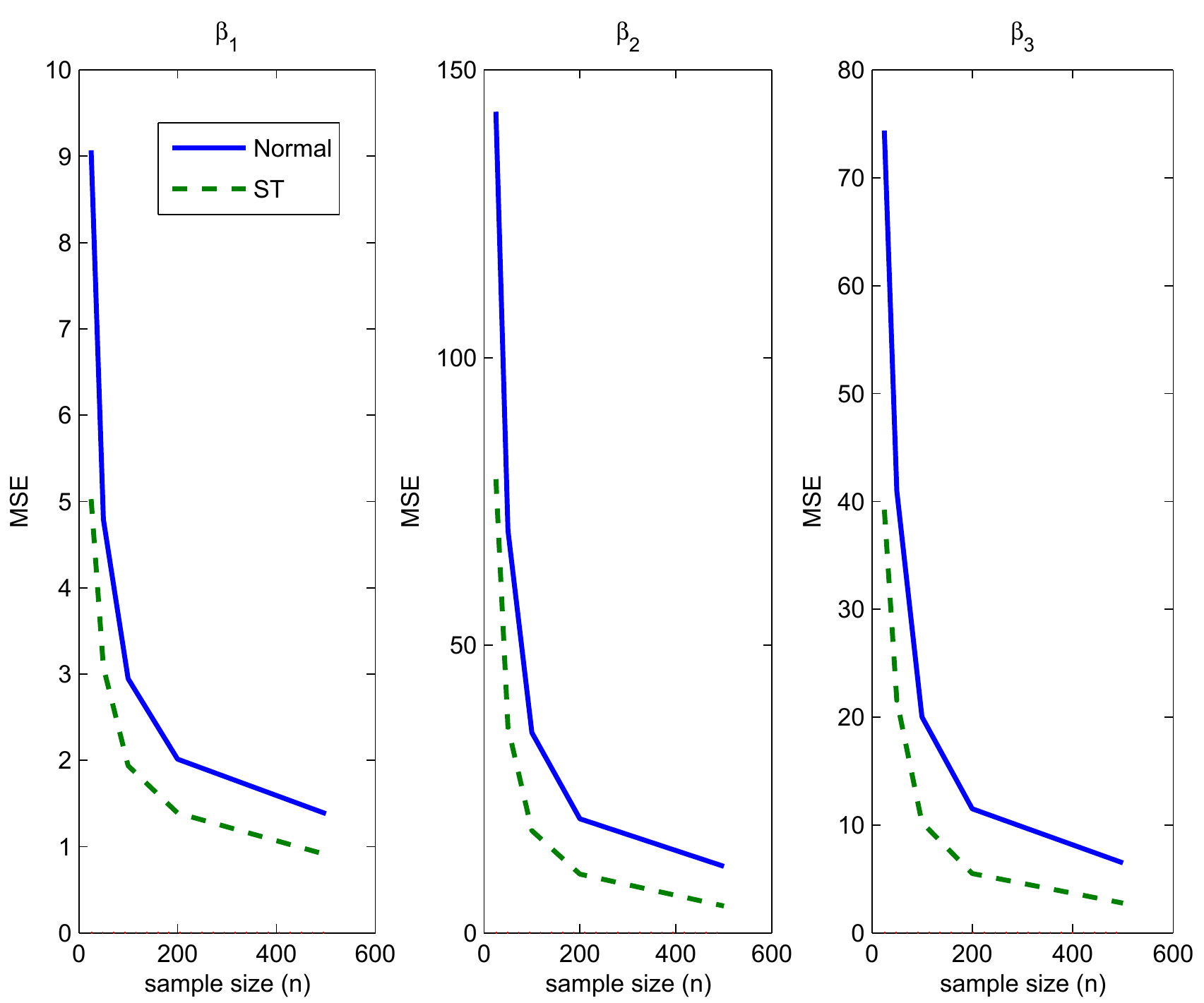}\\
\caption{MSE of the approximate ML estimates of $\beta_1,\beta_2$
and $\beta_3$, based on $500$ Monte Carlo data generated from a ST
model with $\sigma^2_b=100$ and for different
sample size $n$, when fitting a ST--NLME model (green line) and a N--NLME model (blue line).\label{out2-2}}
\end{center}
\end{figure}

\begin{figure}[htb]
\begin{center}
\includegraphics[scale=.4]{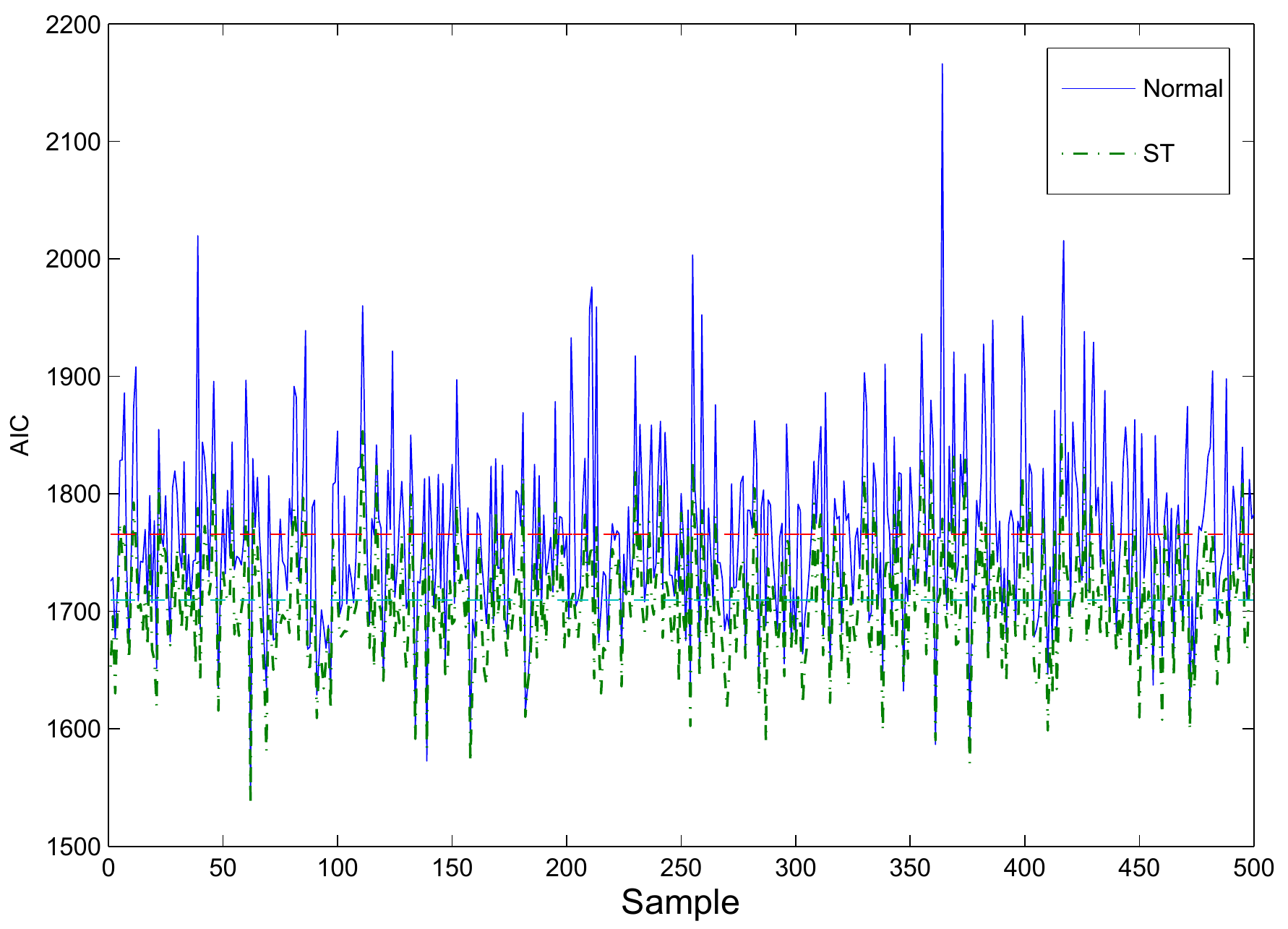}\\
\caption{AIC values for fitting a ST--NLME model (green line) and a N--NLME model (blue line), based on $500$ Monte Carlo data generated from a ST model with $\sigma^2_b=100$ and $n=25$. }\label{out22}
\end{center}
\end{figure}

Additionally, from Table \ref{Norma_simula} we can see that the AIC measure was able to classify the correct model well, indicating that the ST--NLME model presents a better fit than the N--NLME model, and the criteria for both models is illustrated in Figure \ref{out22}, where we show the AIC values for each sample and fitted model. 
Thence we conclude that the result given in Theorem \ref{prop1} provides a good approximation for the marginal likelihood function. In fact, this approximation is needed in order to make
the calculation of the AIC computationally feasible (and easy).\\



\section{Theophylline kinetics data--{\it Theoph}}\label{sec5}

The Theophylline kinetics data set was first reported by \cite{boeckmann1994nonmem}, and it was previously analysed in \cite{PinheiroBates95} and \cite{PinheiroBates2000} by fitting a N--NLME model. In this section, we revisit the {\it Theoph} data with the aim of providing additional
inferences by considering SMSN distributions. In the experiment, the anti-asthmatic drug 
Theophylline was administered orally to 12 subjects whose serum
concentration were measured 11 times over the following 25 hours.
This is an example of a laboratory pharmacokinetic study
characterized by many observations on a moderate number of subjects.
Figure \ref{out3}(a) displays the profiles of the Theophylline
concentrations for the twelve patients.


\begin{figure}[ht]
\begin{center}
\includegraphics[width=.9\textwidth]{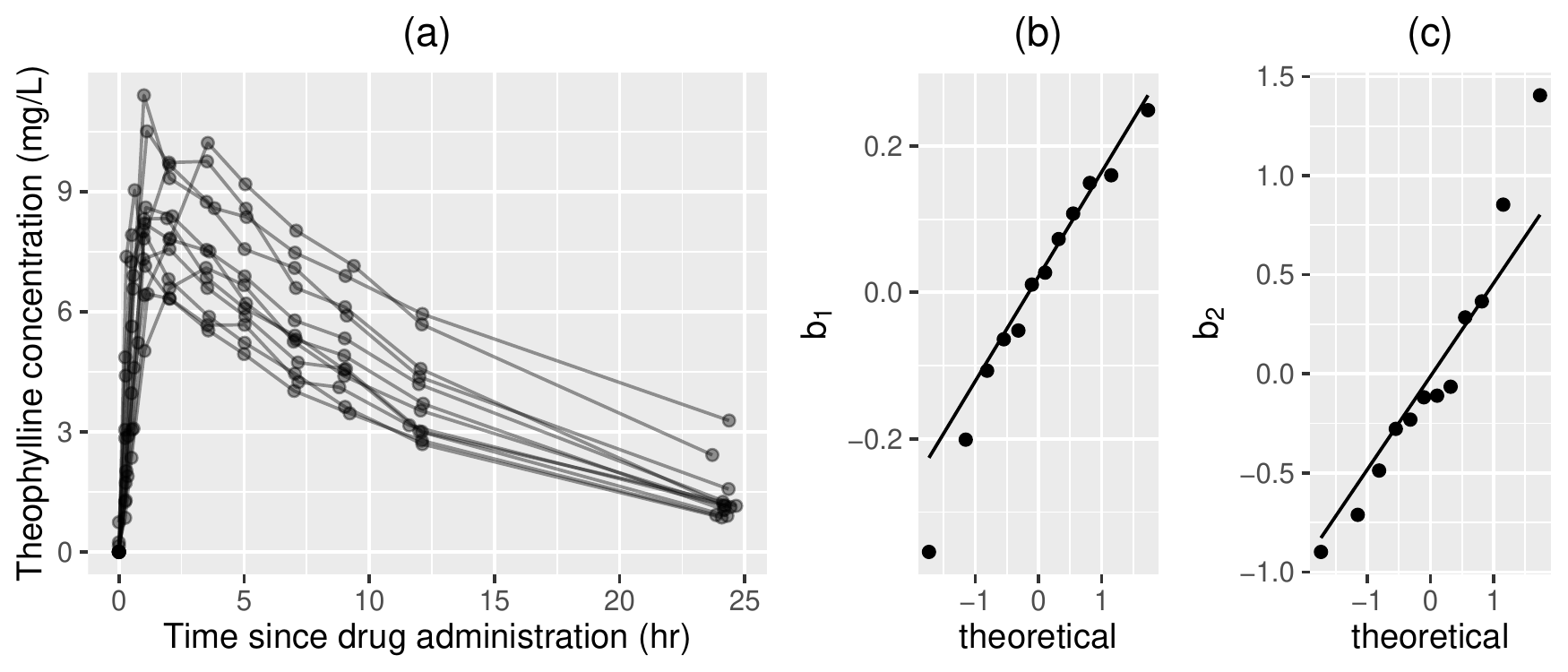}
\caption{{\it Theoph} data set. (a) Theophylline concentration (in mg/L) versus time since oral administration of the drug in twelve patients, and normal Q-Q plots of empirical Bayes
estimates of $b_{1i}$ (b) and $b_{2i}$ (c).\label{out3}}
\end{center}
\end{figure}

We fit a NLME model to the data considering the same nonlinear function as in \cite{PinheiroBates2000}, which can be written as
\begin{eqnarray}
C_{ij}&=&\frac{D_{i}\exp\{-(\beta_{1}+b_{i1})+(\beta_{2}+b_{i2})+\beta_{3}\}}{\exp(\beta_2+b_{i2})-\exp(\beta_3)}\hspace{5cm}\nonumber\\&&\times\left(\exp\{-\exp(\beta_{3})t_{ij}\}-\exp\{-\exp(\beta_2+b_{i2})t_{ij}\}\right) + \epsilon_{ij},
\end{eqnarray}
for $i=1,\ldots,12$, $j=1,\ldots,11$, where $C_{ij}$ represents the
$j$th observed concentration (mg/L) on the $i$th patient. $D_{i}$
represents the dose (mg/kg) administered orally to the $ith$ patient, and $t_{ij}$ is the time in hours. 
To verify the existence of skewness in the random effects, we start by fitting a traditional N--NLME model as in \cite{PinheiroBates2000}. Figures \ref{out3}(b) and \ref{out3}(c) depict
the Q-Q plots of the empirical Bayes estimates of $\mathbf{b}_i$ and shows that there
are some non-normal patterns on the random effects, including outliers and possibly skewness,  and therefore supporting the use of thick-tailed distributions.

Hence, we now consider a SMSN distribution for $\mathbf{b}_{i}$ and SMN distribution for ${\bepsilon}_{i}$, as in \eqref{modSnmis1}. Specifically, we consider the Normal, SN, ST, SCN and SSL distributions from the SMSN class for comparative purposes, and the results are presented next.

%
Table \ref{ajuste1} contains the ML estimates of the parameters from the five models, together with their corresponding standard errors calculated via the observed
information matrix. The AIC measure indicates that heavy-tailed distributions present better fit that the Normal and SN--NLME models. Particularly, the model with ST distribution has the smaller AIC, being therefore the selected model. 
{The standard errors of $\blambda$ are not reported since they are often not reliable (see \cite{schumacher2020scale}, for example),} and it is important to notice that the estimates for the variance
components are not comparable since they are on different scales.

\begin{table}[!htb]
 \centering
\small
\caption{ML estimation results for fitting various NLME models on
the {\it Theoph} data. SE denotes the estimated asymptotic standard errors
based on the observed information matrix. ($d_{11}, d_{12},
d_{22}$), are the distinct elements of the matrix
$\textbf{D}^{1/2}$. } \label{ajuste1}
\begin{tabular}{crrrrrrrrrr}\hline
&\multicolumn{2}{c}{N--NLME}&\multicolumn{2}{c}{SN--NLME}&\multicolumn{2}{c}{ST--NLME}&\multicolumn{2}{c}{SSL--NLME}&\multicolumn{2}{c}{SCN--NLME}\\\hline
Parameter&\multicolumn{1}{c}{MLE}&\multicolumn{1}{c}{SE}&\multicolumn{1}{c}{MLE}&\multicolumn{1}{c}{SE}&
\multicolumn{1}{c}{MLE}&\multicolumn{1}{c}{SE}&\multicolumn{1}{c}{MLE}&\multicolumn{1}{c}{SE}&\multicolumn{1}{c}{MLE}&\multicolumn{1}{c}{SE}\\
\hline
$\beta_1$&-3.228 & 0.066 & -3.232 & 0.239 & -3.200 & 0.163 & -3.214 & 0.180 & -3.195 & 0.137 \\
$\beta_2$&0.470 & 0.280 & 0.481 & 0.845 & 0.520 & 0.317 & 0.498 & 0.376 & 0.379 & 0.239 \\
$\beta_3$&-2.455 & 0.101 & -2.455 & 0.117 & -2.424 & 0.078 & -2.422 & 0.072 & -2.424 & 0.068 \\
$\sigma^2_e$&  0.503 & 0.049 & 0.502 & 0.057 & 0.297 & 0.114 & 0.165 & 0.059 & 0.208 & 0.056 \\
$d_{11}$& 0.167 & 0.072 & 0.212 & 0.222 & 0.226 & 0.192 & 0.164 & 0.140 & 0.182 & 0.126 \\
$d_{12}$&0.000 & 0.046 & -0.066 & 0.113 & -0.013 & 0.226 & -0.018 & 0.185 & 0.017 & 0.161 \\
$d_{22}$&0.644 & 0.239 & 0.784 & 0.447 & 0.714 & 0.440 & 0.525 & 0.280 & 0.522 & 0.290 \\
$\lambda_{1}$& &  & -2.740 &  & -28.605 &  & -27.143 &  & -26.482 &  \\
$\lambda_{2}$& &  & 2.677 &  & 7.997 &  & 9.415 &  & 3.152 &  \\
$\nu$ ($\nu_1$)& &  &  &  & 4.528 &  & 1.182 &  & 0.483 &  \\
$\nu_2$& && & &   &&  &&0.264&\\\hline
AIC&\multicolumn{2}{c}{368.044}&\multicolumn{2}{c}{369.676}&\multicolumn{2}{c}{{\bf 358.755}}&\multicolumn{2}{c}{360.657}&\multicolumn{2}{c}{359.748}\\
\hline
\end{tabular}
\end{table}

{ To asses the predictive performance of the N--NLME and SMSN--NLME
models, we remove sequentially the last few points of each response
vector, then we compute the ML estimates using the remaining data.
The deleted observations are considered as the true values to be
predicted.  As a measure of precision we use the mean of
absolute relative deviation $|(y_{ip}-\widehat{y}^+_{ip})/y_{ip}|$  (MARD),
where $p$ is the time point under forecast. 
For instance, if we drop
out the last five measurements, then the prediction of
$\yp_i=(y_{i7}, y_{i8}, y_{i9}, y_{i10},y_{i11})^{\top},$ denoted by
$\widehat{\mathbf{y}}^+_i=(\widehat{y}^+_{i7}, \widehat{y}^+_{i8},
\widehat{y}^+_{i9}, \widehat{y}^+_{i10},
\widehat{y}^+_{i11})^{\top}$, is made using (\ref{condi}), for $i=1,\ldots,12$.
Figure \ref{prevTheoph} presents the average of MARD in percentage $(\%)$ when the last $1,2,3,4$ and $5$ observations are deleted sequentially in each response vector and shows that the heavy-tailed SMSN models provide in general more accurate predictors than the normal model.
Particularly, when the last 5 observations are deleted for each subject, the difference between MARD from the ST and normal model is of almost $6\%$. 
Thus, the SMSN--NLME model with heavy-tailed distributions not only provides better
model fitting, it also yield smaller prediction errors for the
Theophylline kinetics data. 
}

\begin{figure}[htb]
\begin{center}
\includegraphics[width=.5\textwidth]{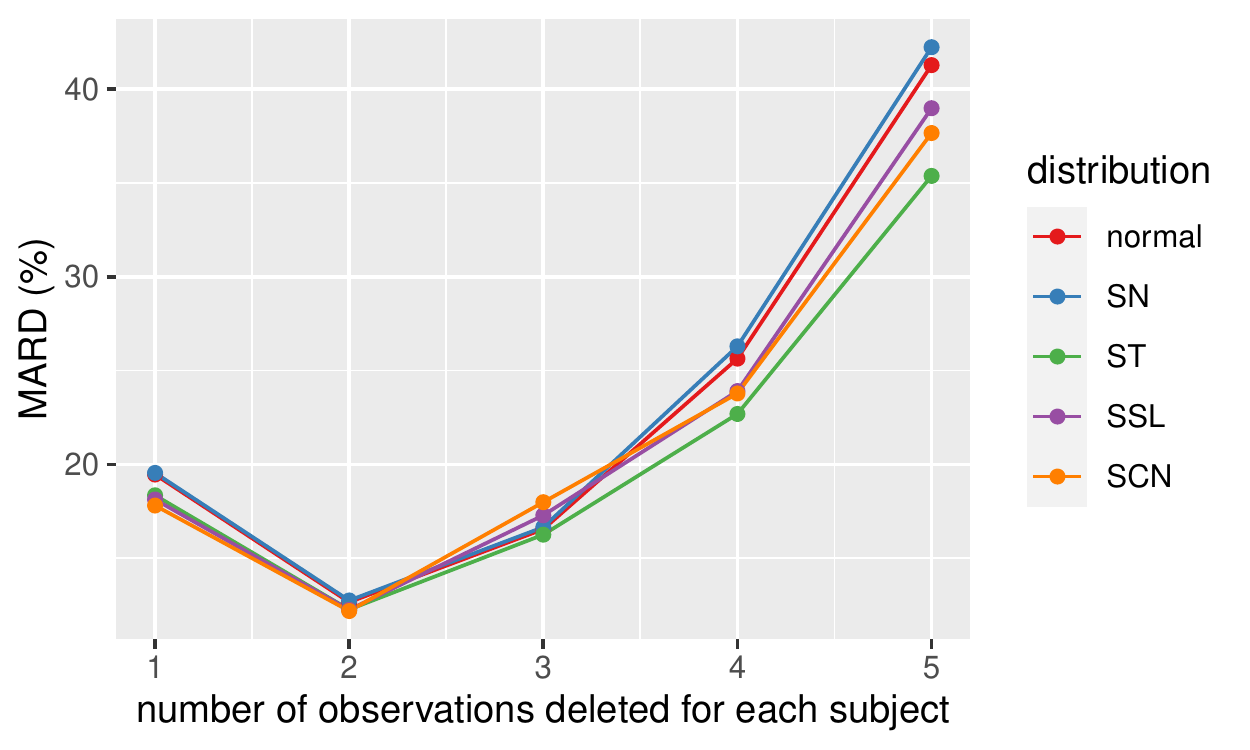}\\
\caption{{{ {\it Theoph} data set. Comparison of  forecast accuracy in terms
of MARD when the last $1,2,3,4$ and $5$ observations of each response
vector are deleted sequentially.}}}\label{prevTheoph}
\end{center}
\end{figure}

Furthermore, to assess the goodness of fit of the selected model, we construct a Healy-type plot \citep{healy1968multivariate}, by plotting the nominal probability values $1/n, 2/n, \hdots, n/n$ against the theoretical cumulative probabilities of the ordered observed Mahalanobis distances, which is calculated using the result Theorem \ref{prop1}. The Mahalanobis distances is a convenient measure for evaluating the distributional assumption of the response variable, once if the fitted model is appropriate the distribution of the Mahalanobis distance is known and given, for example, in \cite{schumacher2020scale}. If the fitted model is appropriate, the plot should resemble a straight line through the origin with unit slope. 
We also construct a Healy's plot for the Normal model for comparison, and the results are presented in 
Figure \ref{healyTheoph}.
It is clear that the observed Mahalanobis distances are closer to the expected ones in ST-NLME model than in the N-NMLE model, corroborating with the previous results.

\begin{figure}[htb]
\begin{center}
\includegraphics[width=.7\textwidth]{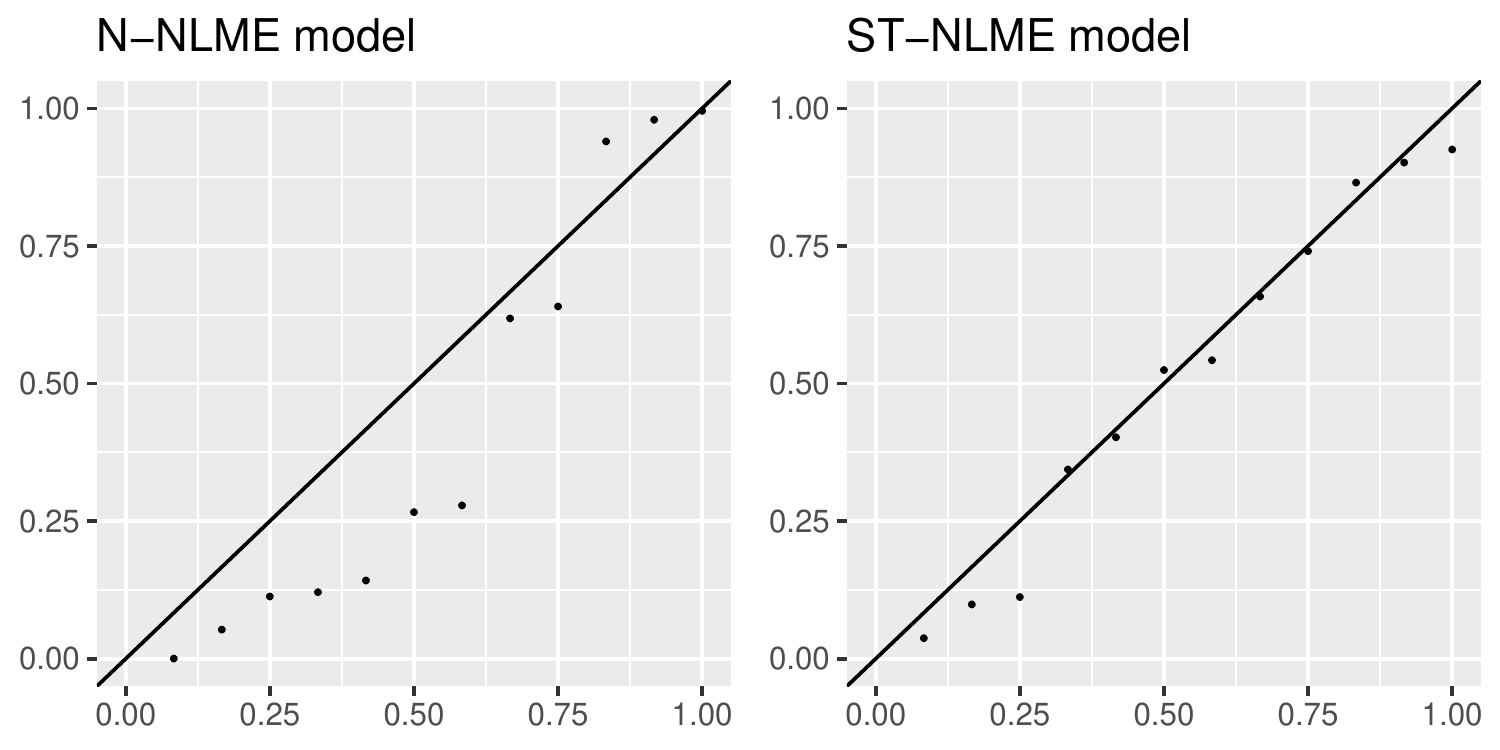}\\
\caption{{{ {\it Theoph} data set. Healy-type plots for assessing the goodness of fit of some SMSN-NLME models.}}}\label{healyTheoph}
\end{center}
\end{figure}

\section{Discussion and future works}\label{sec6}

Nonlinear mixed effects models are a research area with several challenging
aspects. In this paper, we proposed the application of a new
class of asymmetric distributions, called the SMSN class of distributions, to
NLME models. This enables the fit of a NLME model even when the
data distribution deviates from the traditional normal distribution.
Approximate closed-form expressions were obtained for the likelihood function of the observed data that can be maximized by using existing statistical software. An EM-type algorithm to obtain
approximate MLEs was presented, by exploring some important statistical
properties of the SMSN class. According to \cite{Wu2004}, in
complicated models, approximate methods are computationally more
efficient and may be preferable to the exact method, specially when
it exhibits convergence problems, such as slow convergence or
non-convergence. 

Furthermore, two simulation studies are presented, showing
the potential efficiency gain in fitting a more flexible model when the
normality assumption is violated. Moreover, in the analysis of the 
Theophylline data set the use of ST--NLME models offered better fitting as well as better prediction performance than the usual normal counterpart. Finally, we note that it may be worthwhile comparing our
results with other methods such as the classical Monte Carlo EM
algorithm or the stochastic version of the EM algorithm (SAEM), which
is beyond the scope of this paper. These issues will be considered
in a separate future work.  Another useful extension would be to consider a more general structure for the within-subject covariance matrix, such as an AR($p$) dependency structure as considered in \cite{schumacher2017censored}.

Finally, the method proposed in this paper is implemented in the software R \citep{rmanual}, and the codes are available for download from Github (\url{https://github.com/fernandalschumacher/skewnlmm}). 
We conjecture that the methodology presented in this paper should
yield satisfactory results in other areas where multivariate data
appears frequently, for instance: dynamic linear models, nonlinear
dynamic models, stochastic volatility models, etc., at the expense
of moderate complexity of implementation.

\section*{Acknowledgements}
Fernanda L. Schumacher acknowledges the partial support of Coordenação de Aperfeiçoamento de Pessoal de Nível Superior - Brasil (CAPES) - Finance Code 001, and by Conselho Nacional de Desenvolvimento Científico e Tecnológico - Brasil (CNPq).

\bibliographystyle{elsarticle-harv}


\end{document}